\documentclass[10pt]{article}

\usepackage[margin=0.7in]{geometry}

\usepackage{amsmath, amssymb, amsfonts, amsthm, mathtools}
\usepackage{xfrac}
\usepackage{bbm}

\usepackage{graphicx}
\usepackage{caption}
\usepackage[font=normalsize]{subcaption}
\usepackage{float}
\usepackage{booktabs}
\usepackage{longtable}
\usepackage{array}
\usepackage{tabularx}
\usepackage{multirow}
\usepackage{makecell}
\usepackage{adjustbox}
\usepackage{pifont}

\usepackage{algorithm}
\usepackage{algorithmic}

\usepackage{enumitem}
\usepackage{xcolor}
\usepackage{hyperref}

\usepackage{natbib}

\DeclareMathOperator{\Var}{Var}
\DeclareMathOperator{\Cov}{Cov}

\newtheorem{theorem}{Theorem}
\newtheorem{assumption}{Assumption}
\newtheorem{assumptions}{Assumptions}
\newtheorem{proposition}{Proposition}

\newtheorem{remark}{Remark}
\newtheorem{definition}{Definition}

\newenvironment{identifiabilityassumptions}{%
  \begin{assumptions}[Assumptions for Identifiability]%
}{%
  \end{assumptions}%
}

\newcolumntype{P}[1]{>{\raggedright\arraybackslash}p{#1}}

\newcommand\BibTeX{{\rmfamily B\kern-.05em \textsc{i\kern-.025em b}\kern-.08em
T\kern-.1667em\lower.7ex\hbox{E}\kern-.125emX}}

\begin{document}

\title{Causal Discovery via Statistical Power (CDSP)}

\author{
Shreya Prakash$^{1}$ \and
Fan Xia$^{4}$ \and
Elena A. Erosheva$^{1,2,3}$
}
\date{%
    $^1$Department of Statistics, University of Washington, Washington, USA\\%
    $^2$School of Social Work, University of Washington, Seattle, WA, USA\\
    $^3$Center for Statistics and the Social Sciences, University of Washington, Seattle, WA, USA\\
    $^4$Department of Epidemiology and Biostatistics, University of California San Francisco, California, USA\\
    [2ex]%
}
\maketitle

\abstract{Causal discovery methods aim to infer causal direction from observational data. Functional causal discovery approaches use structural asymmetries to identify causal directionality but rely on strong modeling assumptions and provide limited tools for uncertainty quantification. We introduce Causal Discovery via Statistical Power (CDSP), a statistical inference framework that connects causal direction estimation with statistical power and enables uncertainty quantification. Considering the foundational setting of bivariate observational data, we show how quantities analogous to statistical power and effect size can be used in causal discovery to determine when data contain sufficient information to favor one direction over the other. We introduce the effect-size asymmetry assumption that characterizes when the probability of correctly detecting the causal direction (i.e., the power of causal discovery) exceeds that of incorrectly favoring the reverse direction. We show that the effect-size asymmetry assumption can be used for causal direction estimation with uncertainty quantification. Simulations show that CDSP direction estimation is robust to mild and moderate model misspecifications. Real data analyses on 100 cause–effect benchmark pairs further demonstrate that CDSP reduces false discovery rates by approximately 18\% relative to a commonly used existing method.
}


\maketitle

\renewcommand\thefootnote{}
\footnotetext{Preprint.}

\renewcommand\thefootnote{\fnsymbol{footnote}}
\setcounter{footnote}{1}

\section{Introduction}

Inferring causal relationships from observational data is a central problem across many scientific disciplines \citep{glymour_review_2019, zhang2012inferring, saito2023causal, kotoku2020causal, rosenstrom2012pairwise, hu2018application, jiang2023signet,forastiere2021identification}. While causal effects can be established through randomized or interventional studies, such experiments are often expensive, time-consuming, or infeasible. As a result, there is substantial interest in methods that infer causal structure directly from observational data.

Functional causal discovery uses structural assumptions on the functional form to identify the causal direction. Functional causal discovery methods have been extensively studied in both bivariate and multivariate settings~\citep{shimizu2011directlingam, hoyer2009nonlinear, peters2014causal, zhang2012identifiability} and widely applied in practice~\citep{rosenstrom2012pairwise, motokawa2020causal, jiao2018bivariate, hu2018application, song2017tell}. 

In this paper, we focus on bivariate functional causal discovery that reduces to determining the causal direction between two variables by fitting regressions in both directions. The bivariate setting is important because it is the simplest setting in which causal discovery identifiability mechanisms can be studied. It can also be considered a building block for more complex multivariate settings. Bivariate causal discovery methods are often evaluated on benchmark datasets consisting of cause--effect pairs~\citep{mooij2016distinguishing}. 

In a bivariate setting, functional causal models posit that the data-generating process for two random variables \(X\) and \(Y\) and the true causal direction \(X \rightarrow Y\) can be written as
\begin{equation}
\label{eq:dgp_example}
   Y = f(X,\eta), 
\end{equation}
where \(\eta\) is a noise term that is independent of \(X\) and \(f \in \mathcal{F}\) is a function belonging to a specified model class.

Functional causal discovery methods rely on strong identifiability assumptions to identify the causal direction from observational data~\citep{shimizu2006linear, hoyer2009nonlinear}.
\begin{identifiabilityassumptions}\label{assump:model} In the bivariate setting, the identifiability assumptions are as follows:
\begin{enumerate}[label=(A\arabic*)]
    \item The data-generating mechanism is correctly specified within the assumed functional model class.
    \item $X$ and $Y$ have an acyclic relationship.
    \item $X$ and $Y$ are unconfounded, i.e., they share no common causes.
    \item The data are independent and identically distributed (i.i.d.).
    \item The data-generating process does not fall into the few unidentifiable cases characterized by \citet{zhang2012identifiability}; for example, when the relationship between $X$ and $Y$ is linear and all variables are Gaussian.
\end{enumerate}
\end{identifiabilityassumptions}

Under Assumptions~\ref{assump:model}, the independence between cause variable \(X\) and noise variable \(\eta\) holds only in the true direction and fails in the reverse direction. This asymmetry enables identification of a unique causal direction from observational data. When Assumptions~\ref{assump:model} hold, the causal direction can be identified; however, impacts of assumption violations on traditional causal discovery approaches are unclear. In addition, most existing causal discovery methods do not provide statistical guarantees---that is, statistical inference for the estimated direction.

In practice, assumption violations such as model misspecification due to deviations from the assumed functional form are often unavoidable. This has been widely recognized; for example, in linear regression, the consequences of assumption violations have been extensively studied, motivating the use of diagnostic tools and uncertainty quantification to assess model fit and guide interpretation~\citep{weisberg2005applied, kutner2005applied}.

In contrast, current work on uncertainty quantification in functional causal discovery has been limited. It can be broadly divided into two categories: resampling-based procedures that assess stability of estimated causal structures and inferential approaches based on statistical tests. Bootstrap-based methods extend causal discovery algorithms by repeatedly refitting models on resampled datasets to assess stability of inferred causal orderings \citep{komatsu2010assessing, thamvitayakul2012bootstrap}; however, as we will see in this paper, resulting reliability measures can be misleading when the underlying modeling assumptions are violated. \citet{wang2025confidence} develop statistical inference via confidence sets for causal orderings by inverting goodness-of-fit tests in identifiable structural equation models with additive errors. The confidence sets offer insights into model-class misspecification: an empty confidence set indicates that the ``model class does not capture the data-generating process''~\citep[p.~2]{wang2025confidence}. While this approach provides a formal method for uncertainty quantification for both bivariate and multivariate functional causal discovery, it does not extend to more general functional causal models such as the post-nonlinear model. Additionally, both resampling-based and inferential methods are developed under the assumption that the working models are correctly specified and may fail to provide reliable directional evidence under mild violations, as robustness to such assumption violations is not explicitly examined.

More broadly, much of the existing causal discovery literature emphasizes identifiability under idealized modeling assumptions, with comparatively limited attention paid to uncertainty quantification and robustness under assumption violations. A central challenge in applying causal discovery in practice is that identifiability assumptions---such as those related to model specification (Assumption~A1)---are rarely satisfied exactly, making it essential to develop methods that remain informative under such violations.

To address this challenge, we develop \emph{Causal Discovery via Statistical Power (CDSP)}, a framework that connects causal direction estimation with statistical power concepts. CDSP provides a statistical perspective on causal discovery by introducing notions of statistical power and effect size tailored to this setting. Within this framework, we introduce the \emph{effect-size asymmetry assumption}, which posits that the incorrect direction deviates more strongly from its null and is therefore easier to reject than the correct direction. We show that this asymmetry implies that the probability of detecting the correct causal direction exceeds that of detecting the incorrect direction, yielding a statistical characterization of directional evidence. We further leverage the effect-size asymmetry assumption to formulate causal direction estimation as a problem of assessing directional detectability. We develop the CDSP procedure to estimate the causal direction via effect-size asymmetry and to provide statistical inference for this estimate. Through simulation studies, we demonstrate that CDSP is robust to model misspecification. Analyzing 100 cause–effect benchmark pairs~\citep{mooij2016distinguishing}, we find that CDSP yields materially improved performance, reducing the false discovery rate by approximately 18\% relative to LiNGAM, both across all pairs and when restricted to pairs with approximately linear relationships---a setting in which methods based on the linear model class are expected to perform well.

Next, we introduce the CDSP framework and its theoretical foundations in Section~\ref{sec:methods_CDSP}, followed by numerical results in Section~\ref{sec:numerical_results}. We conclude with a discussion of our contributions and open questions for future research in Section~\ref{sec:discussion}.

\section{Causal Discovery via Statistical Power (CDSP)}
\label{sec:methods_CDSP}

In this section, we introduce Causal Discovery via Statistical Power (CDSP), a framework that connects causal direction estimation with statistical power. We begin by formulating a hypothesis-testing framework for bivariate causal discovery and defining a corresponding notion of directional power. We then show that this notion of power depends on the relative magnitudes of directional effect sizes, leading to the effect-size asymmetry assumption. Using asymptotic arguments, we establish that the effect-size asymmetry assumption is equivalent to favoring the correct direction with higher probability. Importantly, the effect-size asymmetry assumption does not explicitly depend on the presence of model misspecification. We then present the CDSP procedure, the implementation of the proposed framework for direction estimation, and discuss inference for the estimated direction through a measure of directional strength.

\subsection{Motivating the Role of Power, Effect Size, and Assumptions in Causal Discovery}
\label{subsec:motivate_subsamp} 

\begin{assumption}[Standing Assumption]
\label{assump:standing_xy}
Assume that the true causal direction is $X \to Y$, and the data-generating mechanism is
\begin{equation*}
    \label{eq:Y_gen}
Y = m(X) + \eta,
\end{equation*}
where $m$ is an arbitrary (unknown) generating function. The noise $\eta$ is, by construction, independent of $X$.

To assess the reverse direction $Y \to X$, we attempt to represent $X$ as a function of $Y$ by fitting a model of the form
\begin{equation*}
    \label{eq:X_eq}
X = l(Y) + \xi,
\end{equation*}
where $l$ is restricted to lie in a specified model class (e.g., linear functions under LiNGAM). The function $l$ is defined as the population risk minimizer,
\begin{equation*}
    \label{eq:l}
l = \arg\min_{h \in \mathcal{L}} R(h),
\end{equation*}
where $R(h)$ denotes the population risk. The reverse-direction residual is defined as
\begin{equation*}
    \label{eq:xi}
    \xi := X - l(Y).
\end{equation*}
Thus $(l,\xi)$ represents the best-fitting approximation to the true
data-generating process within the chosen model class when expressing the
data in the reverse direction.
\end{assumption}
Functional causal discovery methods assume that the regression function lies within the specified model class. Under this assumption, independence between the predictor and the noise term in the correct causal direction, but generally not in the incorrect one, creates an asymmetry that can be exploited for causal discovery. However, this independence cannot be tested directly because the noise term is unobserved. Instead, the noise terms, denoted by $\hat{\eta}$ and $\hat{\xi}$, must be estimated from regression models fit in the two possible directions. Measures of dependence between $\hat{\eta}$ and $X$, and between $\hat{\xi}$ and $Y$, are then compared to determine the causal direction. However, simply comparing the estimated dependence measures in the two directions does not account for the uncertainty in these empirical quantities. We therefore formulate causal direction detection as a pair of hypothesis tests corresponding to the two possible causal directions in \eqref{eq:h_indep},

\begin{equation}
    \label{eq:h_indep}
    \begin{aligned}
         H_Y^0: X \perp \eta,\; m \in \mathcal{M},
        \quad H_Y^1: \text{otherwise}, \\
        H_X^0: Y \perp \xi,\; l \in \mathcal{L},
        \quad H_X^1: \text{otherwise}.
\end{aligned}
\end{equation}

The null hypotheses $H_Y^0$ and $H_X^0$ jointly assess (i) independence between the predictor and residual and (ii) the adequacy of the model classes $\mathcal{M}$ and $\mathcal{L}$ for the $X \to Y$ and $Y \to X$ directions, respectively. Their alternatives, $H_Y^1$ and $H_X^1$, correspond to any violation of these conditions.

Here, $\mathcal{M}$ denotes the model class used to represent the generating mechanism $m$ in the $X \to Y$ direction, while $\mathcal{L}$ denotes the model class used to approximate the reverse-direction regression function $l$. In practice, these model classes are often taken to be the same (e.g., both linear), but they need not coincide in general.

Under the assumptions of functional causal discovery methods (Assumptions~\ref{assump:model}), the model class corresponding to the true causal direction is correctly specified\footnote{Model misspecification occurs when the true data-generating relationship does not follow the assumed functional form and therefore cannot be represented within the specified model class.}, the two tests are assessing only (i), and therefore the test results can directly inform causal direction. When either or both $\mathcal{M}$ or $\mathcal{L}$ are misspecified, rejection of a null hypothesis may arise from residual dependence, model misspecification, or both. This motivates decomposing each composite alternative into subcases that distinguish these possibilities in \eqref{eq:alt_decomp}.

\begin{equation}
\label{eq:alt_decomp}
\begin{aligned}
H_Y^{1a}:&\; X \not\!\perp \eta,\; m \in \mathcal{M},\\
H_Y^{1b}:&\; X \perp \eta,\; m \notin \mathcal{M},\\
H_Y^{1c}:&\; X \not\!\perp \eta,\; m \notin \mathcal{M},
\end{aligned}
\qquad
\begin{aligned}
H_X^{1a}:&\; Y \not\!\perp \xi,\; l \in \mathcal{L},\\
H_X^{1b}:&\; Y \perp \xi,\; l \notin \mathcal{L},\\
H_X^{1c}:&\; Y \not\!\perp \xi,\; l \notin \mathcal{L}.
\end{aligned}
\end{equation}

Intuitively, alternatives that depart more strongly from the null should be easier to detect and therefore yield higher power, whereas alternatives that represent weaker or more limited departures should generally yield lower power. In our setting, when model misspecification (Assumption~A1) is the sole source of violation, the correct causal direction satisfies the independence condition and can violate the null only through the model misspecification. Consequently, if the model class in the correct direction is only mildly misspecified, the corresponding alternative may remain close to the null and thus be harder to detect. This motivates studying power under the derived alternatives, as different forms of null violation can have different detectability and therefore provide evidence for causal direction.

In classical hypothesis testing \citep{cohen_power, lehmann2005testing, casella2021statistical}, the \emph{statistical power} of a test is defined as
\[
\mathbb{P}(\text{reject } H_0 \mid H_1 \text{ is true}),
\]
for a given null hypothesis $H_0$ and a given alternative hypothesis $H_1$.

Because causal discovery involves comparing two directional hypotheses, the notion of statistical power must be adapted to account for this paired testing structure. In particular, power corresponds to the probability that the procedure favors the correct direction.

\begin{definition}[Directional Power]
Directional power is the probability that the causal discovery method correctly favors $X \to Y$, namely
\begin{equation}
\label{eq:causal_power}
\mathbb{P}\Big(
\left(\text{fail to reject }H_Y^0\right)\wedge\left(\text{reject }H_X^0\right)
\,\Big|\, X\to Y
\Big).
\end{equation}
\end{definition}

To understand how this probability depends on the strength of directional evidence, we next introduce a directional effect size. Let $T_Y$ and $T_X$ denote the test statistics corresponding to the tests of $H_Y^0$ and $H_X^0$, respectively.
\begin{definition}[Directional Effect Size]
Let $\theta_Y$ and $\theta_X$ denote the population values of the test statistics $T_Y$ and $T_X$, respectively. Let $\sigma_Y/\sqrt{n}$ and $\sigma_X/\sqrt{n}$ denote their corresponding asymptotic standard deviations.

Define the directional effect size for testing $H_Y^0:\theta_Y=\theta_{0,Y}$ as
\[
\Delta_{Y,n}
=
\frac{(\theta_Y-\theta_{0,Y})\sqrt{n}}{\sigma_Y}.
\]
Similarly, the directional effect size for testing $H_X^0:\theta_X=\theta_{0,X}$ is
\[
\Delta_{X,n}
=
\frac{(\theta_X-\theta_{0,X})\sqrt{n}}{\sigma_X}.
\]

In our setting, we assume that under the null hypotheses of independence, the population values of the test statistics are zero (i.e., $\theta_{0,Y} = \theta_{0,X} = 0$). This assumption holds for commonly used dependence measures such as the Hilbert–Schmidt Independence Criterion (HSIC)~\citep{gretton_kernel_2008}. Under this assumption, the directional effect sizes simplify to
\begin{equation}
\label{eq:def_effect_size}
\Delta_{Y,n}
=
\frac{\theta_Y\sqrt{n}}{\sigma_Y}, 
\qquad
\Delta_{X,n}
=
\frac{\theta_X\sqrt{n}}{\sigma_X}.
\end{equation}
\end{definition}
\begin{assumption}[Traditional Asymmetry Assumption]
\label{assump:trad_asymmetry}
Traditionally, functional causal discovery methods rely on an asymmetry between the two directions, which can be expressed in terms of the population test statistics as
\[
\theta_X \;>\; \theta_Y.
\]

Under the standard identifiability assumptions (Assumptions~\ref{assump:model}), the correct direction $X \to Y$ yields $\theta_Y = 0$, while the reverse direction $Y \to X$ yields $\theta_X > 0$. Thus, this inequality arises as a direct consequence of the model assumptions when they are satisfied.
\end{assumption}

Although this Assumption~\ref{assump:trad_asymmetry} is not typically stated explicitly in the literature, it is implicitly relied upon. For example, many functional causal discovery methods compare test statistics or dependence measures across directions and select the direction with the smaller value (e.g., \citet{shimizu2006linear,mooij2016distinguishing,peters2017elements}).

Building on the traditional asymmetry assumption (Assumption~\ref{assump:trad_asymmetry}), we incorporate the variability of the test statistics through standardization and account for the corresponding critical values, so that the comparison reflects how strongly each direction departs from its null, yielding a refined comparison in Assumption~\ref{assump:effect_asymmetry}.

\begin{assumption}[Effect-Size Asymmetry]
\label{assump:effect_asymmetry}
Let $q_{Y,n}^\alpha$ and $q_{X,n}^\alpha$ denote the directional standardized critical values,
\begin{equation}
\label{eq:def_std_crit}
q_{Y,n}^\alpha
=
\frac{\sqrt{n}\, c_{Y,n}^\alpha}{\sigma_Y}, 
\qquad
q_{X,n}^\alpha
=
\frac{\sqrt{n}\, c_{X,n}^\alpha}{\sigma_X},
\end{equation}
where $c_{Y,n}^\alpha$ and $c_{X,n}^\alpha$ denote the level-$\alpha$ critical values under the null hypotheses $H_Y^0$ and $H_X^0$, respectively. We say that the effect-size asymmetry assumption holds if
\begin{equation}
    \label{eq:effect_size_asymm}
    \frac{\theta_X}{\sigma_X} > \frac{\theta_Y}{\sigma_Y} \quad \text{and} \quad q_{X,n}^\alpha - q_{Y,n}^\alpha = o(1)
\end{equation}
\end{assumption}

\begin{remark}
\label{rem:agnostic_model_miss}
The effect-size asymmetry assumption depends on the relative magnitudes of the directional effect sizes and thus does not explicitly depend on model misspecification.
\end{remark}

When model misspecification (Assumption~A1) is the sole source of violation, the alternatives in Equation~\eqref{eq:h_indep} simplify to those in Equation~\eqref{eq:test_alt_miss}. 

\begin{equation} 
\label{eq:test_alt_miss} 
\begin{aligned}
        H_Y^{1b}:\; X \perp \eta,\; m \notin \mathcal{M} 
        \;\text{ and }\; 
        H_X^{1c}:\; Y \not\!\perp \xi,\; l \notin \mathcal{L}, 
        & \text{ if $X \to Y$ is true}, \\[6pt]
        H_Y^{1c}:\; X \not\!\perp \eta,\; m \notin \mathcal{M} 
        \;\text{ and }\; 
        H_X^{1b}:\; Y \perp \xi,\; l \notin \mathcal{L}, 
        & \text{ if $Y \to X$ is true}. 
\end{aligned}
\end{equation}

In this regime, the incorrect direction ($Y \to X$) exhibits two sources of deviation from its null---both $Y \not\perp \xi$ and $l \notin \mathcal{L}$---whereas the correct direction ($X \to Y$) exhibits only one, namely $m \notin \mathcal{M}$. The effect-size asymmetry assumption posits that two such deviations induce a larger departure from the null than a single deviation, and hence a larger departure relative to the corresponding rejection threshold. This disparity yields asymmetric finite-sample behavior: the incorrect direction is more readily detectable and is therefore rejected more quickly than the correct direction. 

To make the connection between directional effect sizes and rejection probabilities concrete, we work under the misspecification regime in Equation~\eqref{eq:test_alt_miss} where the true causal direction is $X \to Y$, the correct-direction null is violated through model misspecification only (i.e., $H_Y^{1b}$), and the incorrect-direction null is violated through both dependence and model misspecification (i.e., $H_X^{1c}$). 

In what follows, we use the independence and goodness-of-fit test of \citet{sen_testing_2014}, which jointly assesses (i) independence between the predictor and residuals using the Hilbert--Schmidt independence criterion (HSIC)~\citep{gretton_kernel_2008} and (ii) whether the regression function lies within the assumed functional model class. We focus on linear model misspecification. 
Theorem~\ref{thm:bivariate_joint} in the Appendix shows that the corresponding test statistics $(T_{1b,Y},T_{1c,X})$ are asymptotically bivariate normal under the alternatives.\footnote{We present the results in Theorem~\ref{thm:effect_size_power} under an asymptotic Gaussian approximation for clarity, although the same arguments apply more generally to any pair of directional test statistics with a known joint distribution, whether asymptotic or finite-sample.}

We show in Theorem~\ref{thm:effect_size_power} that, under this regime, the effect-size asymmetry assumption is equivalent to the probability of correctly detecting the causal direction (i.e., directional power) exceeding the probability of incorrectly favoring the reverse direction.

\begin{theorem}[Effect-Size Asymmetry Characterization of Directional Power]
\label{thm:effect_size_power}
Consider the misspecification regime in which the correct-direction null is violated through model misspecification only (i.e., $H_Y^{1b}$ holds) while the incorrect-direction null is violated through both dependence and model misspecification (i.e., $H_X^{1c}$ holds).

Let $T_{1b,Y}$ and $T_{1c,X}$ be the test statistics corresponding to $H_Y^{1b}$ and $H_X^{1c}$. Let $c_Y^\alpha$ and $c_X^\alpha$ denote the level-$\alpha$ critical values for $nT_{1b,Y}$ and $nT_{1c,X}$, respectively \citep{sen_testing_2014}, and define $c_{Y,n}^\alpha := \frac{c_Y^\alpha}{n}$ and $c_{X,n}^\alpha := \frac{c_X^\alpha}{n}$. Define the directional detectability indices
\begin{equation}
    \label{eq:dir_idx}
    I_{Y,n}:=\frac{\theta_Y - c_{Y,n}^\alpha}{\sigma_Y},
    \qquad
    I_{X,n}:=\frac{\theta_X - c_{X,n}^\alpha}{\sigma_X}.
\end{equation}
Equivalently, $\sqrt{n}\, I_{Y,n} = \Delta_{Y,n} - q_{Y,n}^\alpha, \sqrt{n}\,I_{X,n} = \Delta_{X,n} - q_{X,n}^\alpha$. Denote by $\eta$ the noise term in the true direction and by $\xi$ the noise term in the reverse direction, as defined in Assumption~\ref{assump:standing_xy}. Define the misspecified residuals in the true and incorrect directions as
\[
\epsilon = m(X) - X \tilde{\beta}_0 + \eta, 
\qquad 
\delta = l(Y) - Y \tilde{\gamma}_0 + \xi,
\]
where $\tilde{\beta}_0$ and $\tilde{\gamma}_0$ are the population linear regression coefficients in the correct and incorrect directions, respectively. 
Let
\[
\theta_Y = \theta(X,\epsilon),
\qquad
\theta_X = \theta(Y,\delta),
\]
denote the population HSIC dependence measures corresponding to $(X,\epsilon)$ and $(Y,\delta)$, respectively, and let
\[
\sigma_Y^2 := \sigma_{1b,Y}^2,
\qquad
\sigma_X^2 := \sigma_{1c,X}^2,
\]
denote the asymptotic variances of $T_{1b,Y}$ and $T_{1c,X}$. Then
\begin{align*}
\sqrt{n}(T_{1b,Y}-\theta_Y) &\xrightarrow{d} \mathcal N(0,\sigma_Y^2),\\
\sqrt{n}(T_{1c,X}-\theta_X) &\xrightarrow{d} \mathcal N(0,\sigma_X^2).
\end{align*}
As a result, under the asymptotic normal approximations,
\[
\begin{aligned}
&\mathbb{P}\!\left((\text{reject } H_X^0)\wedge(\text{fail to reject } H_Y^0)\,\middle|\,H_Y^{1b},H_X^{1c}\right) \\
&\qquad >
\mathbb{P}\!\left((\text{reject } H_Y^0)\wedge(\text{fail to reject } H_X^0)\,\middle|\,H_Y^{1b},H_X^{1c}\right)
\end{aligned}
\]
if and only if
\begin{equation}
\label{eq:effect_short}
I_{X,n} > I_{Y,n},
\end{equation}
or equivalently,
\begin{equation}
\label{eq:effect_short_scaled}
\Delta_{X,n} - q_{X,n}^\alpha > \Delta_{Y,n} - q_{Y,n}^\alpha.
\end{equation}
Furthermore, since $q_{X,n}^\alpha - q_{Y,n}^\alpha = o(1)$, then asymptotically Equation~\ref{eq:effect_short} is equivalent to 
\[
\frac{\theta_X}{\sigma_X} > \frac{\theta_Y}{\sigma_Y}.
\]
Thus, directional effect sizes $\Delta_{X,n}$ and $\Delta_{Y,n}$ govern detectability of the true causal direction.
\end{theorem}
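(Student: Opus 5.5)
The plan is to reduce the comparison of the two joint probabilities to a comparison of two marginal rejection probabilities. Under the normal approximation that comparison becomes a comparison of two standard normal quantiles.

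First I will establish the marginal limits. I invoke Theorem~\ref{thm:bivariate_joint} in the Appendix, which gives joint asymptotic bivariate normality of $(T_{1b,Y},T_{1c,X})$ under $H_Y^{1b}$ and $H_X^{1c}$ for the HSIC-based statistics of \citet{sen_testing_2014}. Its marginals are exactly the two displayed convergences $\sqrt{n}(T_{1b,Y}-\theta_Y)\to\mathcal N(0,\sigma_Y^2)$ and $\sqrt{n}(T_{1c,X}-\theta_X)\to\mathcal N(0,\sigma_X^2)$. Here $\theta_Y=\theta(X,\epsilon)$ and $\theta_X=\theta(Y,\delta)$ are the population HSICs of the predictor with the misspecified linear residuals.

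I then write the standardized statistics
\[
Z_Y=\frac{\sqrt{n}(T_{1b,Y}-\theta_Y)}{\sigma_Y},\qquad Z_X=\frac{\sqrt{n}(T_{1c,X}-\theta_X)}{\sigma_X}.
\]
Rejecting $H_X^0$, i.e.\ $nT_{1c,X}>c_X^\alpha$, is the event $\{T_{1c,X}>c_{X,n}^\alpha\}$, which is $\{Z_X>-\sqrt{n}\,I_{X,n}\}$. Likewise, failing to reject $H_Y^0$ is the event $\{Z_Y\le -\sqrt{n}\,I_{Y,n}\}$. Set $a=\sqrt{n}\,I_{X,n}$ and $b=\sqrt{n}\,I_{Y,n}$, and let $A=\{Z_X>-a\}$ and $B=\{Z_Y>-b\}$.

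The key step is the elementary set identity
\[
\mathbb P(A\cap B^c)-\mathbb P(A^c\cap B)=\mathbb P(A)-\mathbb P(B),
\]
which follows by adding and subtracting $\mathbb P(A\cap B)$. As a result, the correlation between the two statistics in Theorem~\ref{thm:bivariate_joint} drops out entirely, and only the marginals matter. Under the normal approximation this gives
\[
\mathbb P(A\cap B^c)-\mathbb P(A^c\cap B)=\Phi(a)-\Phi(b).
\]
Since $\Phi$ is strictly increasing, this difference is positive iff $a>b$, i.e.\ iff $I_{X,n}>I_{Y,n}$. Multiplying by $\sqrt n$ and using the identities $\sqrt n I_{\cdot,n}=\Delta_{\cdot,n}-q^\alpha_{\cdot,n}$ gives the equivalent form \eqref{eq:effect_short_scaled}.

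For the asymptotic statement, I write
\[
\sqrt{n}\,(I_{X,n}-I_{Y,n})=\sqrt{n}\Big(\frac{\theta_X}{\sigma_X}-\frac{\theta_Y}{\sigma_Y}\Big)-(q^\alpha_{X,n}-q^\alpha_{Y,n}).
\]
The last term is $o(1)$ by Assumption~\ref{assump:effect_asymmetry}. If $\theta_X/\sigma_X-\theta_Y/\sigma_Y$ is a fixed nonzero constant, the first term diverges with that sign, so for all sufficiently large $n$ the sign of $I_{X,n}-I_{Y,n}$ equals the sign of $\theta_X/\sigma_X-\theta_Y/\sigma_Y$. This yields the claimed asymptotic equivalence. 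The boundary case of equality is degenerate and must be excluded or treated separately.

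The main obstacle is making the ``asymptotic normal approximation'' step precise rather than the algebra. The thresholds $-a$ and $-b$ move with $n$ (indeed, $c_{X,n}^\alpha=c_X^\alpha/n$ puts the thresholds at order $1/n$), so I need uniform convergence of the marginal distribution functions. I will appeal to Pólya's theorem: convergence to a continuous limit distribution is uniform. This makes the error in replacing $\mathbb P(A)$ and $\mathbb P(B)$ by $\Phi(a)$ and $\Phi(b)$ uniformly $o(1)$. I will then state the equivalence, as the theorem does, at the level of the Gaussian approximation, and note that the sign conclusion is robust whenever $|\Phi(a)-\Phi(b)|$ is bounded away from zero. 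I also need $\sigma_X,\sigma_Y>0$, i.e.\ non-degenerate limits, which holds under the alternatives by Theorem~\ref{thm:bivariate_joint}.
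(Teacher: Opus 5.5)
Your proposal is correct and follows essentially the paper's proof: add and subtract $\mathbb{P}(A\cap B)$ so that only the marginal rejection probabilities matter, approximate each by $\Phi(\sqrt{n}\,I_{\cdot,n})$ using the marginals of Theorem~\ref{thm:bivariate_joint}, invoke strict monotonicity of $\Phi$, and use the $o(1)$ critical-value term for the asymptotic form. Your added care (Pólya's theorem, excluding the equality case, and keeping the equivalence at the Gaussian-approximation level, which matters because both rejection probabilities tend to one under fixed alternatives) goes beyond what the paper makes explicit; the one claim to soften is that $\sigma_X,\sigma_Y>0$ follows from Theorem~\ref{thm:bivariate_joint}, since that theorem establishes $\theta>0$ but not positive variances, so this is a tacit assumption, as it is in the paper.
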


\begin{proof}
Under the misspecification regime $(H_Y^{1b},H_X^{1c})$, decompose the marginal rejection probabilities as
\begin{align*}
\mathbb{P}(\text{reject } H_X^0 \mid H_Y^{1b},H_X^{1c})
&=
\mathbb{P}\!\left((\text{reject } H_X^0)\wedge(\text{fail to reject } H_Y^0)\,\middle|\,H_Y^{1b},H_X^{1c}\right) \\
&\qquad +
\mathbb{P}\!\left((\text{reject } H_X^0)\wedge(\text{reject } H_Y^0)\,\middle|\,H_Y^{1b},H_X^{1c}\right),
\\[0.5em]
\mathbb{P}(\text{reject } H_Y^0 \mid H_Y^{1b},H_X^{1c})
&=
\mathbb{P}\!\left((\text{reject } H_Y^0)\wedge(\text{fail to reject } H_X^0)\,\middle|\,H_Y^{1b},H_X^{1c}\right) \\
&\qquad +
\mathbb{P}\!\left((\text{reject } H_X^0)\wedge(\text{reject } H_Y^0)\,\middle|\,H_Y^{1b},H_X^{1c}\right).
\end{align*}
Subtracting the two identities gives
\begin{align}
&\mathbb{P}\!\left((\text{reject } H_X^0)\wedge(\text{fail to reject } H_Y^0)\,\middle|\,H_Y^{1b},H_X^{1c}\right) \nonumber\\
&\qquad -
\mathbb{P}\!\left((\text{reject } H_Y^0)\wedge(\text{fail to reject } H_X^0)\,\middle|\,H_Y^{1b},H_X^{1c}\right) \nonumber\\
&=
\mathbb{P}(\text{reject } H_X^0 \mid H_Y^{1b},H_X^{1c})
-
\mathbb{P}(\text{reject } H_Y^0 \mid H_Y^{1b},H_X^{1c}).
\label{eq:directional_detectability_diff}
\end{align}

Because the rejection event for $H_X^0$ depends only on the test statistic $T_{1c,X}$ and the rejection event for $H_Y^0$ depends only on the test statistic $T_{1b,Y}$, the corresponding marginal rejection probabilities are determined by the marginal asymptotic laws of these statistics. Hence
\[
\mathbb{P}(\text{reject } H_X^0 \mid H_Y^{1b},H_X^{1c})
=
\mathbb{P}(\text{reject } H_X^0 \mid H_X^{1c}),
\]
and
\[
\mathbb{P}(\text{reject } H_Y^0 \mid H_Y^{1b},H_X^{1c})
=
\mathbb{P}(\text{reject } H_Y^0 \mid H_Y^{1b}).
\]
Therefore, Equation \eqref{eq:directional_detectability_diff} becomes
\begin{align}
&\mathbb{P}\!\left((\text{reject } H_X^0)\wedge(\text{fail to reject } H_Y^0)\,\middle|\,H_Y^{1b},H_X^{1c}\right) \nonumber\\
&\qquad -
\mathbb{P}\!\left((\text{reject } H_Y^0)\wedge(\text{fail to reject } H_X^0)\,\middle|\,H_Y^{1b},H_X^{1c}\right) \nonumber\\
&=
\mathbb{P}(\text{reject } H_X^0 \mid H_X^{1c})
-
\mathbb{P}(\text{reject } H_Y^0 \mid H_Y^{1b}).
\label{eq:directional_detectability_marginals}
\end{align}
Under the asymptotic normal approximation we have,
\begin{align*}
\mathbb{P}(\text{reject } H_X^0 \mid H_X^{1c})
&=
\mathbb{P}(nT_{1c,X}>c_X^\alpha \mid H_X^{1c}) \\
&=
\mathbb{P}\!\left(
\frac{\sqrt{n}(T_{1c,X}-\theta_X)}{\sigma_X}
>
\frac{\sqrt{n}(c_X^\alpha/n-\theta_X)}{\sigma_X}
\,\middle|\,H_X^{1c}
\right) \\
&\approx
1-\Phi\!\left(\frac{\sqrt{n}(c_X^\alpha/n-\theta_X)}{\sigma_X}\right) \\
&=
\Phi\!\left(\frac{\sqrt{n}(\theta_X-c_X^\alpha/n)}{\sigma_X}\right).
\end{align*}
Similarly,
\begin{align*}
\mathbb{P}(\text{reject } H_Y^0 \mid H_Y^{1b})
&=
\mathbb{P}(nT_{1b,Y}>c_{Y}^\alpha \mid H_Y^{1b}) \\
&=
\mathbb{P}\!\left(
\frac{\sqrt{n}(T_{1b,Y}-\theta_Y)}{\sigma_Y}
>
\frac{\sqrt{n}\left(c_Y^\alpha/n-\theta_Y\right)}{\sigma_Y}
\,\middle|\,H_Y^{1b}
\right) \\
&\approx
1-\Phi\!\left(\frac{\sqrt{n}(c_Y^\alpha/n-\theta_Y)}{\sigma_Y}\right) \\
&=
\Phi\!\left(\frac{\sqrt{n}(\theta_Y-c_Y^\alpha/n)}{\sigma_Y}\right).
\end{align*}
Substituting these expressions into Equation \eqref{eq:directional_detectability_marginals} yields
\begin{align*}
&\mathbb{P}\!\left((\text{reject } H_X^0)\wedge(\text{fail to reject } H_Y^0)\,\middle|\,H_Y^{1b},H_X^{1c}\right) \\
&\qquad -
\mathbb{P}\!\left((\text{reject } H_Y^0)\wedge(\text{fail to reject } H_X^0)\,\middle|\,H_Y^{1b},H_X^{1c}\right) \\
&=
\Phi\!\left(\frac{\sqrt{n}(\theta_X-c_X^\alpha/n)}{\sigma_X}\right)
-
\Phi\!\left(\frac{\sqrt{n}(\theta_Y-c_Y^\alpha/n)}{\sigma_Y}\right).
\end{align*}
Since $\Phi$ is strictly increasing, the right-hand side is positive if and only if
\begin{equation}
    \label{eq:ssc}
    \frac{\sqrt{n}(\theta_X-c_X^\alpha/n)}{\sigma_X} > \frac{\sqrt{n}(\theta_Y-c_Y^\alpha/n)}{\sigma_Y} \Longleftrightarrow \Delta_{X,n}-q_{X,n}^\alpha>\Delta_{Y,n}-q_{Y,n}^\alpha \Longleftrightarrow I_{X,n} > I_{Y,n}.
\end{equation}
Since $q_{X,n}^\alpha - q_{Y,n}^\alpha = o(1)$, then asymptotically Equation~\ref{eq:ssc} is equivalent to 
\[
\frac{\theta_X}{\sigma_X} > \frac{\theta_Y}{\sigma_Y}.
\]
This proves the claim.
\end{proof}

\begin{remark}
\label{rem:delta_q_interpretation}
The directional detectability indices $I_{X,n}$ and $I_{Y,n}$, defined in Equation~\ref{eq:dir_idx}, provide a finite-sample operational condition for evaluating effect-size asymmetry in terms of standardized effect sizes and critical values, and are what we use in practice for CDSP. As shown in Theorem~\ref{thm:effect_size_power}, for sufficiently large $n$, the condition in Equation~\ref{eq:ssc} is equivalent to the effect-size asymmetry assumption in Equation~\ref{eq:effect_size_asymm}. The latter provides a simpler, population-level characterization that aids intuition and interpretation, while the former captures the finite-sample quantities that govern the procedure.
\end{remark}

\begin{remark}
\label{rem:asymptotics_relevance}
CDSP is most informative under mild to moderate model misspecification, where the effect-size asymmetry is most likely to hold, as the test statistic corresponding to the correct direction tends to remain closer to the null than the test statistic corresponding to the incorrect direction. These settings correspond to regimes of greatest practical relevance, as severe model misspecification is often readily detectable using standard diagnostic tools (e.g., visual inspection), whereas mild to moderate misspecification is more subtle yet can still impact causal discovery.
\end{remark}

\subsection{Estimating Causal Direction and Directional Support Probability via CDSP}
\label{subsec:CDSP_procedure}

In Section~\ref{subsec:motivate_subsamp}, we defined notions of statistical power and effect size tailored to causal discovery and introduced the effect-size asymmetry assumption, which posits that the standardized deviation from the null is larger in the incorrect direction than in the correct direction. We showed that this asymmetry is equivalent to the probability of detecting the correct causal direction exceeding the probability of detecting the incorrect direction. In this section, we leverage effect-size asymmetry to estimate the causal direction and quantify the strength of directional evidence.

Suppose we have a bivariate dataset of size \(N\) on variables \(X\) and \(Y\). We assume that the true data-generating process is unknown (i.e., we do not know whether $X \to Y$ or $Y \to X$, nor the functional form of $f$ in Equation~\ref{eq:dgp_example}). We obtain CDSP point estimate of the causal direction and an estimate of the directional support probability using Algorithm~\ref{alg:CDSP_procedure}. Specifically, CDSP Procedure outputs (i) estimated causal direction $D_{\text{CDSP}}$: either $X \to Y$, $Y \to X$, or inconclusive, and (ii) directional support measure $P_{\text{CDSP}}$.

\begin{algorithm}
\caption{CDSP Procedure: Estimation and Inference via the Effect-Size Asymmetry}
\label{alg:CDSP_procedure}
\begin{algorithmic}[1]

\STATE \textbf{Input:} Dataset $\{(X_i,Y_i)\}_{i=1}^N$, significance level $\alpha$, number of bootstrap replicates $B$.

\STATE Using bootstrap, estimate $\hat\theta_Y$ and $\hat\theta_X$ as bootstrap means, $\hat\sigma_{Y}$ and $\hat\sigma_{X}$ as bootstrap standard deviations, and $\hat{c}_{Y,n}^\alpha$ and $\hat{c}_{X,n}^\alpha$ via the procedure of \citet{sen_testing_2014}.
\STATE Compute directional detectability index estimates
\[
\hat I_{Y,n} = \frac{\hat\theta_Y - \hat c_{Y,n}^\alpha}{\hat\sigma_Y},
\qquad
\hat I_{X,n} = \frac{\hat\theta_X - \hat c_{X,n}^\alpha}{\hat\sigma_X}.
\]
\STATE \textbf{Point estimate of direction:}
\[
\hat{D}_{\text{CDSP}}=
\begin{cases}
X \to Y, & \text{if } \hat I_{X,n}> \hat I_{Y,n},\\
Y \to X, & \text{if } \hat I_{Y,n} > \hat I_{X,n}.
\end{cases}
\]

\STATE \textbf{Bootstrap inference:} For $b = 1,\dots,B$, draw a bootstrap resample and recompute $\hat I_{X,n}^{(b)}$ and $\hat I_{Y,n}^{(b)}$ using Steps 2 and 3.

\STATE Estimate directional support probability:
\[
\widehat{P}_{\text{CDSP}} =
\begin{cases}
\frac{1}{B}\sum_{b=1}^B 
\mathbf{1}\{\hat I_{X,n}^{(b)} - \hat I_{Y,n}^{(b)} > 0\},
& \text{if } \hat{D}_{\text{CDSP}} = X \to Y,\\[6pt]
\frac{1}{B}\sum_{b=1}^B 
\mathbf{1}\{\hat I_{Y,n}^{(b)} - \hat I_{X,n}^{(b)} > 0\},
& \text{if } \hat{D}_{\text{CDSP}} = Y \to X.
\end{cases}
\]

\STATE \textbf{Output:} $\hat{D}_{\text{CDSP}}$ and $\widehat{P}_{\text{CDSP}}$.

\end{algorithmic}
\end{algorithm}

We estimate causal direction $D_{\text{CDSP}}$ in Algorithm~\ref{alg:CDSP_procedure} via effect-size asymmetry. By Theorem~\ref{thm:effect_size_power}, effect-size asymmetry is equivalent to the probability of detecting the correct causal direction exceeding the probability of favoring the incorrect direction. We first estimate the required components needed to evaluate effect-size asymmetry via the directional detectibility indices---$\theta_Y$ and $\theta_X$ via bootstrap means, $\sigma_{Y}$ and $\sigma_{X}$ via bootstrap standard deviations, and $c_{Y,n}^\alpha$ and $c_{X,n}^\alpha$ via the procedure of \citet{sen_testing_2014}---and then substitute these estimates into Equation~\eqref{eq:dir_idx}. We then compare $\hat I_{X,n}$ and $\hat I_{Y,n}$ to determine $\hat{D}_{\text{CDSP}}$: if $\hat I_{X,n} > \hat I_{Y,n}$, we set $\hat{D}_{\text{CDSP}} = X \to Y$; if $\hat I_{Y,n} > \hat I_{X,n}$, we set $\hat{D}_{\text{CDSP}} = Y \to X$.\footnote{In the event that $\hat I_{X,n} = \hat I_{Y,n}$, $\hat{D}_{\text{CDSP}}$ is inconclusive. This case is unlikely to occur in practice and is thus omitted from the procedure.}

The computational complexity of computing the point estimate in Algorithm~\ref{alg:CDSP_procedure} is \(O(BN^2)\). CDSP uses \(B\) bootstrap replicates to estimate the directional detectibility indices, and each replicate requires a constant number of HSIC computations. Under the standard empirical implementation, HSIC requires \(O(N^2)\) time due to pairwise kernel evaluations, yielding an overall time complexity of \(O(BN^2)\). The space complexity is \(O(N^2 + B)\), where \(O(N^2)\) arises from storing the HSIC kernel matrices and \(O(B)\) from storing the bootstrap outputs.

In addition to a point estimate of the causal direction, the CDSP procedure in Algorithm~\ref{alg:CDSP_procedure} outputs an inferential measure quantifying directional support. Specifically, if CDSP estimates \(X \to Y\), it reports the bootstrap estimate
\begin{equation}
\label{eq:p_CDSP}
\widehat{P}_{\mathrm{CDSP}}
=
\widehat{\mathbb{P}}\!\left(\hat{I}_{X,n}-\hat{I}_{Y,n}>0\right)
=
\frac{1}{B}\sum_{b=1}^B 
\mathbf{1}\!\left\{\hat I_{X,n}^{(b)} - \hat I_{Y,n}^{(b)} > 0\right\},
\end{equation}
where $B$ denotes the number of bootstrap samples, and 
$\hat I_{X,n}^{(b)}$ and $\hat I_{Y,n}^{(b)}$ are the corresponding estimates computed from the $b$-th resample. An analogous quantity is reported when CDSP estimates $Y \to X$. The computational complexity of bootstrap inference in Algorithm~\ref{alg:CDSP_procedure} is \(O(B^2 N^2)\), since it computes the point-estimation procedure across \(B\) bootstrap replicates, each of which has time complexity \(O(BN^2)\). The space complexity remains \(O(N^2 + B)\).

The population quantity $P_{\text{CDSP}}$ is the probability that the CDSP point estimate $\hat{D}_{\text{CDSP}}$ favors $X \to Y$ under sampling variability from the data-generating distribution. Thus, $\widehat{P}_{\mathrm{CDSP}}$ estimates the probability of directional support for $\hat{D}_{\text{CDSP}}$. From a decision-theoretic perspective, when the direction favored by CDSP is the true causal direction, $\widehat{P}_{\mathrm{CDSP}}$ can be interpreted as the success probability of the procedure under 0--1 loss---that is, the probability that the procedure selects the true causal direction \citep{casella2021statistical}.

Directional support probability $\widehat{P}_{\mathrm{CDSP}}$ can also be interpreted as a concordance probability, 
that is, the probability that one random quantity exceeds another. 
Probabilities of this form arise in several statistical contexts, including the area under the receiver operating characteristic curve (AUC) \citep{hanley1982meaning}.
In these settings, the quantity represents the probability that a randomly drawn observation from one group exceeds a randomly drawn observation from another. In the CDSP procedure, the randomness instead arises from the sampling distribution of the estimators $\hat{I}_{X,n}$ and $\hat{I}_{Y,n}$. Accordingly, $\widehat{P}_{\mathrm{CDSP}}$ measures the probability that the directional evidence favoring $X \to Y$ exceeds that favoring $Y \to X$ under sampling variability. Values of $\widehat{P}_{\mathrm{CDSP}}$ near $0.5$ indicate little directional separation between the two candidate directions, whereas values farther from $0.5$ indicate stronger directional dominance. Table~\ref{tab:pCDSP_interpretation} provides a heuristic interpretation scale adapted from AUC discrimination guidelines \citep{hosmer2013applied}.

\begin{table}[!ht]
\small
\centering
\begin{tabular}{ll}
\toprule
$\widehat{P}_{\mathrm{CDSP}}$ & Interpretation \\
\midrule
$\approx 0.5$ & Little or no directional separation \\
$0.5$--$0.7$ & Weak directional support \\
$0.7$--$0.8$ & Moderate directional support \\
$0.8$--$0.9$ & Strong directional support \\
$\ge 0.9$ & Very strong directional support \\
\bottomrule
\end{tabular}
\caption{Guidelines for interpretation of directional support $\widehat{P}_{\mathrm{CDSP}}$ values.}
\label{tab:pCDSP_interpretation}
\end{table}

\subsection{Establishing Consistency of CDSP Direction and Support Probability Estimates}
\label{subsec:CDSP_estimate}

In this section, we establish consistency of the inferred causal direction via effect-size asymmetry, and of the directional support probability $\widehat{P}_{\mathrm{CDSP}}$ from Algorithm~\ref{alg:CDSP_procedure}. In particular, we show that under the effect-size asymmetry assumption (Assumption~\ref{assump:effect_asymmetry}), the inferred causal direction is consistent.

\begin{theorem}[Consistency of Directional Estimate via Effect-Size Asymmetry]
\label{thm:ssc_consistent}
Let $c_{Y,n}^\alpha := c_Y^\alpha/n, c_{X,n}^\alpha := c_X^\alpha/n$, with plug-in estimates $\hat c_{Y,n}^\alpha := \hat c_Y^\alpha/n, \hat c_{X,n}^\alpha := \hat c_X^\alpha/n$. Suppose $\sigma_X$ and $\sigma_Y$ are bounded away from zero. Under standard regularity conditions, the estimators described in Algorithm~\ref{alg:CDSP_procedure} satisfy
\[
\hat{\sigma}_{X}\xrightarrow{p}\sigma_{X}, \quad
\hat{c}_{X}^{\alpha}\xrightarrow{p}c_{X}^{\alpha}, \quad
\hat{\theta}_X\xrightarrow{p}\theta_X,
\]
and
\[
\hat{\sigma}_{Y}\xrightarrow{p}\sigma_{Y}, \quad
\hat{c}_{Y}^{\alpha}\xrightarrow{p}c_{Y}^{\alpha}, \quad
\hat{\theta}_Y\xrightarrow{p}\theta_Y.
\]
Furthermore, let $I_X := \frac{\theta_X}{\sigma_X}$ and $I_Y := \frac{\theta_Y}{\sigma_Y}$. Then, if the asymptotic effect-size asymmetry assumption holds (Assumption~\ref{assump:effect_asymmetry}), that is,
\[
I_X > I_Y,
\]
then
\[
\mathbb{P}\bigl(\hat{I}_{X,n} > \hat{I} _{Y,n}\bigr) \to 1
\quad \text{as } n \to \infty.
\]
\end{theorem}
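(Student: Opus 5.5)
The argument has two stages. First we establish consistency of each plug-in component. Then we use the continuous mapping theorem to transfer the strict population inequality $I_X>I_Y$ to the estimated indices with probability tending to one.

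\emph{Component consistency.} Under the regularity conditions, Theorem~\ref{thm:bivariate_joint} and the marginal limits in Theorem~\ref{thm:effect_size_power} give $\sqrt{n}(T_{1c,X}-\theta_X)\xrightarrow{d}\mathcal N(0,\sigma_X^2)$, and similarly for $T_{1b,Y}$. For $\hat\theta_X$, the bootstrap mean, we will invoke bootstrap consistency for the HSIC-based statistic (a smooth functional of V-statistics), so that $\hat\theta_X - T_{1c,X} = o_p(1)$. Combined with $T_{1c,X}\xrightarrow{p}\theta_X$, this yields $\hat\theta_X\xrightarrow{p}\theta_X$. For $\hat\sigma_X$, the bootstrap standard deviation, we take as part of the regularity conditions that the conditional law of $\sqrt{n}(T^*-T)$ converges to $\mathcal N(0,\sigma_X^2)$ together with uniform integrability of its square. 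This gives $n\hat\sigma_{X,\mathrm{boot}}^2\xrightarrow{p}\sigma_X^2$, where $\hat\sigma_X$ is understood on the $\sqrt{n}$ scale matching $\sigma_X$. For $\hat c_X^\alpha$, we invoke the consistency of the \citet{sen_testing_2014} bootstrap calibration of the null quantile, together with continuity and strict monotonicity of the limiting null distribution at its $1-\alpha$ quantile. The same arguments apply verbatim to the $Y$ quantities.

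\emph{Index convergence.} Because the critical values are $O_p(1)$ on the $nT$ scale, we have $\hat c_{X,n}^\alpha=\hat c_X^\alpha/n=O_p(1/n)\to 0$, and likewise $c_{X,n}^\alpha\to0$. This is also why the condition $q_{X,n}^\alpha-q_{Y,n}^\alpha=o(1)$ in Assumption~\ref{assump:effect_asymmetry} holds: $q_{\cdot,n}^\alpha=c^\alpha_\cdot/(\sqrt n\,\sigma_\cdot)\to0$. Since $\sigma_X,\sigma_Y$ are bounded away from zero, the map $(\theta,c,\sigma)\mapsto(\theta-c)/\sigma$ is continuous at the limit point. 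The continuous mapping theorem then gives
\[
\hat I_{X,n}=\frac{\hat\theta_X-\hat c_{X,n}^\alpha}{\hat\sigma_X}\xrightarrow{p}\frac{\theta_X}{\sigma_X}=I_X,\qquad \hat I_{Y,n}\xrightarrow{p}I_Y .
\]

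\emph{Conclusion.} Let $\varepsilon=(I_X-I_Y)/2>0$. The event $\{\hat I_{X,n}\le\hat I_{Y,n}\}$ is contained in $\{|\hat I_{X,n}-I_X|\ge\varepsilon\}\cup\{|\hat I_{Y,n}-I_Y|\ge\varepsilon\}$. By the union bound and the previous step, its probability tends to $0$, so $\mathbb P(\hat I_{X,n}>\hat I_{Y,n})\to1$.

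The main obstacle is the first stage, specifically the bootstrap steps. We must justify that bootstrap means and standard deviations of an HSIC statistic built from estimated (misspecified) residuals are consistent under the alternatives $H_Y^{1b}$ and $H_X^{1c}$. We must also justify that the Sen--Sen bootstrap critical values remain $O_p(1)$, and are in fact consistent for $c^\alpha$, when the null is false. I would handle this by folding these facts into the ``standard regularity conditions,'' citing bootstrap validity for Hadamard-differentiable functionals of V-statistics and the Sen--Sen results. Nothing sharper than $\hat c^\alpha=O_p(1)$ is actually needed for the final conclusion, since the critical values enter only through $\hat c_{X,n}^\alpha=\hat c_X^\alpha/n$, which vanishes.
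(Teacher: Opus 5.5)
Your proposal is correct and follows essentially the same route as the paper. Both arguments take component consistency from the Sen--Sen, HSIC and bootstrap literature under the regularity conditions. Both then apply the continuous mapping theorem using $c_{X,n}^\alpha=c_X^\alpha/n\to 0$ and $\sigma_X,\sigma_Y$ bounded away from zero, and finish with the same union bound. The paper takes $\varepsilon=(I_X-I_Y)/4$ where you take $(I_X-I_Y)/2$, which works equally well.

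Two of your remarks are useful clarifications that the paper leaves implicit. First, $\hat\sigma_X$ must be read on the $\sqrt{n}$ scale to converge to $\sigma_X$. Second, only $\hat c^\alpha=O_p(1)$ is actually needed, not consistency of $\hat c^\alpha$.
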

\begin{proof}
Prior work establishes consistency of the bootstrap estimators $\hat{c}_{X}^{\alpha}, \hat{c}_{Y}^{\alpha}, \hat{\theta}_X, \hat{\theta}_Y, \hat{\sigma}_{X},$ and $\hat{\sigma}_{Y}$ under standard regularity conditions (e.g., i.i.d.\ sampling and characteristic kernels) \citep{sen_testing_2014, gretton_kernel_2008, EfroTibs93}.

In particular, by the bootstrap consistency result of \citet{sen_testing_2014}, the estimated critical values satisfy
\begin{equation}
\label{eq:sen-critical}
\hat{c}_{X}^{\alpha} \xrightarrow{p} c^{\alpha}_{X}, 
\quad 
\hat{c}_{Y}^{\alpha} \xrightarrow{p} c^{\alpha}_{Y}.
\end{equation}
Since HSIC is a V\mbox{-}statistic, its empirical form is consistent \citep{gretton_kernel_2008}, implying
\begin{equation}
\label{eq:boot-mean}
\hat{\theta}_X = \hat{\theta}(Y,\delta) \xrightarrow{p} \theta(Y,\delta), 
\quad 
\hat{\theta}_Y = \hat{\theta}(X,\epsilon) \xrightarrow{p} \theta(X,\epsilon).
\end{equation}
Additionally, by standard bootstrap theory \citep{EfroTibs93},
\begin{equation}
\label{eq:boot-sd}
\hat{\sigma}_{X} \xrightarrow{p} \sigma_{X}, 
\quad 
\hat{\sigma}_{Y} \xrightarrow{p} \sigma_{Y}.
\end{equation}
Because $\sigma_X$ and $\sigma_Y$ are bounded away from zero, the continuous
mapping theorem~\citep{vandervaart1998asymptotic} gives
\[
\hat{I}_{X,n}
=
\frac{\hat\theta_X-\hat c_{X,n}^\alpha}{\hat\sigma_X}
\xrightarrow{p}
\frac{\theta_X}{\sigma_X} = I_X,
\qquad
\hat{I}_{Y,n}
=
\frac{\hat\theta_Y-\hat c_{Y,n}^\alpha}{\hat\sigma_Y}
\xrightarrow{p}
\frac{\theta_Y}{\sigma_Y} = I_Y.
\]
Here we use that
\[
c_{X,n}^\alpha=\frac{c_X^\alpha}{n}=o(1),
\qquad
c_{Y,n}^\alpha=\frac{c_Y^\alpha}{n}=o(1).
\]
By the asymptotic effect-size asymmetry assumption (Assumption~\ref{assump:effect_asymmetry}), we have $I_X > I_Y$. 
Let $\Delta:=I_X-I_Y>0$ and take $\varepsilon:=\Delta/4$. If $|\hat{I}_{X,n} - I_X| < \varepsilon$ and $|\hat{I}_{Y,n} - I_{Y}| < \varepsilon$, then
\[
\hat{I}_{X,n} > I_{X}-\varepsilon = I_{X}-\tfrac{\Delta}{4} \;>\; I_{Y}+\tfrac{\Delta}{4} = I_{Y}+\varepsilon > \hat{I}_{Y,n},
\]
hence $\hat{I}_{X,n}>\hat{I}_{Y,n}$. Therefore
\[
\{\hat{I}_{X,n}\le \hat{I}_{Y,n}\} \;\subseteq\; \{|\hat{I}_{X,n}-I_{X}|\ge \varepsilon\}\,\cup\,\{|\hat{I}_{Y,n}-I_{Y}|\ge \varepsilon\},
\]
and so
\[
\mathbb{P}(\hat{I}_{X,n}\le \hat{I}_{Y,n}) \;\le\; \mathbb{P}(|\hat{I}_{X,n}-I_{X}|\ge \varepsilon)\;+\;\mathbb{P}(|\hat{I}_{Y,n}-I_{Y}|\ge \varepsilon)\;\longrightarrow\;0.
\]
Thus $\mathbb{P}(\hat{I}_{X,n}>\hat{I}_{Y,n})\to 1$.
\end{proof}

For inference, we use the bootstrap estimate $\widehat{P}_{\mathrm{CDSP}}$. Proposition~\ref{prop:boot_prob_consistency} establishes consistency of $\widehat{P}_{\mathrm{CDSP}}$. In particular, when the effect-size asymmetry assumption holds for the true direction (e.g., $X \to Y$), the probability that CDSP selects $X \to Y$ converges to one as the sample size increases, while the probability of selecting $Y \to X$ converges to zero.

\begin{proposition}[Consistency of Bootstrap Directional Support Probability]
\label{prop:boot_prob_consistency}
Let $\mathcal{O}_n = \{(X_i, Y_i)\}_{i=1}^n$ denote an i.i.d.\ sample of size $n$, and let directional detectability indices $I_{X,n}$ and $I_{Y,n}$ be defined as in Equation~\eqref{eq:dir_idx}, with $D_n := \hat{I}_{X,n} - \hat{I}_{Y,n}$. By Theorem~\ref{thm:ssc_consistent}, $\mathbb{P}(D_n > 0) \to 1$, so that $D_n$ asymptotically reflects the true causal direction. Let $\mathcal{O}_n^*$ denote a nonparametric i.i.d.\ bootstrap resample drawn from $\mathcal{O}_n$, and let $D_n^*$ denote $D_n$ computed from $\mathcal{O}_n^*$. For the $b$-th bootstrap resample, let $D_n^{(b)} := \hat{I}_{X,n}^{(b)} - \hat{I}_{Y,n}^{(b)}$, and define
\[
\widehat{P}_{\mathrm{CDSP}}
= \frac{1}{B_n}\sum_{b=1}^{B_n}\mathbf{1}\!\left\{D_n^{(b)} > 0\right\},
\qquad B_n \to \infty.
\]

Define the (finite-$n$) directional support probability
\[
P_{\mathrm{CDSP},n} := \mathbb{P}(D_n>0).
\]

Assume the nonparametric i.i.d.\ bootstrap is asymptotically valid for $D_n$ in the sense that, conditionally on $\mathcal{O}_n$,
\[
\sup_{t\in\mathbb{R}}\bigg|
\mathbb{P}\!\left(D_n^* \le t \mid \mathcal{O}_n\right)
-
\mathbb{P}\!\left(D_n \le t\right)
\bigg|
\xrightarrow{p} 0.
\]

Then
\[
\widehat{P}_{\mathrm{CDSP}} - P_{\mathrm{CDSP},n} \xrightarrow{p} 0,
\qquad\text{and}\qquad
\widehat{P}_{\mathrm{CDSP}} \xrightarrow{p} 1.
\]
\end{proposition}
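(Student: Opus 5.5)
We will prove the proposition through a three-term decomposition,
\[
\widehat{P}_{\mathrm{CDSP}} - P_{\mathrm{CDSP},n}
= \underbrace{\Bigl(\widehat{P}_{\mathrm{CDSP}} - P_n^*\Bigr)}_{\text{Monte Carlo}}
+ \underbrace{\Bigl(P_n^* - P_{\mathrm{CDSP},n}\Bigr)}_{\text{bootstrap}},
\qquad
P_n^* := \mathbb{P}\!\left(D_n^*>0 \mid \mathcal{O}_n\right),
\]
and then handle the limit $\widehat{P}_{\mathrm{CDSP}}\xrightarrow{p}1$ separately using Theorem~\ref{thm:ssc_consistent}.

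\textbf{Monte Carlo term.} Conditionally on $\mathcal{O}_n$, the indicators $\mathbf{1}\{D_n^{(b)}>0\}$ for $b=1,\dots,B_n$ are i.i.d.\ Bernoulli$(P_n^*)$. Chebyshev's inequality applied conditionally therefore gives
\[
\mathbb{P}\bigl(|\widehat{P}_{\mathrm{CDSP}}-P_n^*|>\varepsilon \mid \mathcal{O}_n\bigr)\le \frac{1}{4B_n\varepsilon^2}.
\]
Because this bound does not depend on $\mathcal{O}_n$, taking expectations removes the conditioning. Since $B_n\to\infty$, the Monte Carlo term is $o_p(1)$.

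\textbf{Bootstrap term.} We write each probability as one minus a CDF evaluated at $0$:
\[
P_n^* = 1-\mathbb{P}(D_n^*\le 0\mid\mathcal{O}_n),
\qquad
P_{\mathrm{CDSP},n} = 1-\mathbb{P}(D_n\le 0).
\]
Hence $|P_n^*-P_{\mathrm{CDSP},n}|$ is bounded by the Kolmogorov distance in the bootstrap-validity assumption, evaluated at $t=0$. That distance tends to $0$ in probability by hypothesis. Only the single point $t=0$ is needed, so the supremum assumption is more than enough and no continuity-point issue arises.

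Combining the two terms gives the first claim, $\widehat{P}_{\mathrm{CDSP}} - P_{\mathrm{CDSP},n}\xrightarrow{p}0$.

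\textbf{Limit to $1$.} Theorem~\ref{thm:ssc_consistent} gives $P_{\mathrm{CDSP},n}=\mathbb{P}(D_n>0)\to1$. This is a deterministic sequence, so
\[
\widehat{P}_{\mathrm{CDSP}} = P_{\mathrm{CDSP},n} + o_p(1) \xrightarrow{p} 1.
\]

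\textbf{Main obstacle.} The argument rests on the bootstrap-validity hypothesis for $D_n$. The proposition assumes it, so the proof goes through as stated. Verifying it, however, is delicate in the degenerate situations that matter here. Under $H_Y^{1b}$ the statistic $\hat{\theta}_Y$ may have a nondegenerate $\sqrt{n}$ limit, but when $\theta_Y$ is near $0$ the HSIC V-statistic has a degenerate limit, where the naive bootstrap can fail. In addition, $D_n$ converges to a positive constant, so its law becomes degenerate in the limit.

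For the degenerate-limit concern, I would note that only the event $\{D_n>0\}$ matters. We could bypass the Kolmogorov assumption by proving directly that $P_n^*\xrightarrow{p}1$. This would mirror the proof of Theorem~\ref{thm:ssc_consistent} on the bootstrap world, establishing $\hat{I}^{*}_{X,n}\to I_X$ and $\hat{I}^{*}_{Y,n}\to I_Y$ in conditional probability, for which the bootstrap consistency of HSIC from \citet{sen_testing_2014} suffices.
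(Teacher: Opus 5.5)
Your proof is correct and follows the paper's argument. You use the same split through $P_n^* = \mathbb{P}(D_n^*>0\mid\mathcal{O}_n)$ and the same Monte Carlo step; your conditional Chebyshev bound $1/(4B_n\varepsilon^2)$ simply makes the paper's appeal to a conditional law of large numbers explicit and uniform in $\mathcal{O}_n$. You then evaluate the bootstrap-validity assumption at $t=0$ and invoke Theorem~\ref{thm:ssc_consistent} for $P_{\mathrm{CDSP},n}\to 1$, exactly as the paper does. Two minor points: you announce a three-term decomposition but only two terms appear, and your closing remarks on degenerate limits are commentary rather than part of the proof.
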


\begin{proof}
Let 
\[
P_n^* := \mathbb{P}\!\left(D_n^* > 0 \mid \mathcal{O}_n\right).
\]
Conditional on $\mathcal{O}_n$, the variables $\{\mathbf{1}(D_n^{(b)}>0)\}_{b=1}^{B_n}$ are i.i.d.\ Bernoulli$(P_n^*)$. Hence, by the conditional law of large numbers,
\[
\widehat{P}_{\mathrm{CDSP}} - P_n^* \xrightarrow{p} 0 \qquad (B_n \to \infty).
\]

It remains to show that $P_n^* - P_{\mathrm{CDSP},n} \xrightarrow{p} 0$. By bootstrap consistency,
\[
\mathbb{P}\!\left(D_n^* \le 0 \mid \mathcal{O}_n\right) - \mathbb{P}\!\left(D_n \le 0\right) \xrightarrow{p} 0,
\]
which implies
\[
P_n^* - P_{\mathrm{CDSP},n}
=
\mathbb{P}(D_n \le 0) - \mathbb{P}(D_n^* \le 0 \mid \mathcal{O}_n)
\xrightarrow{p} 0.
\]

Thus,
\[
\widehat{P}_{\mathrm{CDSP}} - P_{\mathrm{CDSP},n} \xrightarrow{p} 0.
\]

Finally, since $\mathbb{P}(D_n > 0) \xrightarrow 1$, we have $P_{\mathrm{CDSP},n} \to 1$, and therefore $\widehat{P}_{\mathrm{CDSP}} \xrightarrow{p} 1$.
\end{proof}

\begin{remark}
\label{rem:boot_dn}
Under the assumptions of Theorem~\ref{thm:ssc_consistent}, $D_n$ is a smooth plug-in functional of the empirical distribution,
as it is composed of consistent estimates with denominators bounded away from zero.
Under the nonparametric i.i.d.\ bootstrap, such smooth functionals are bootstrap-valid
under standard regularity conditions.
\end{remark}

\section{Numerical Results}
\label{sec:numerical_results}

In this section, we evaluate performance of CDSP through both simulated and real data experiments.

\subsection{Simulations}
\label{subsec:sims}

In our simulations, we compare the accuracy of CDSP and LiNGAM under increasing levels of linear model misspecification. 
To evaluate CDSP, we first assess whether the population effect-size asymmetry assumption holds using the ground truth quantities $I_X = \frac{\theta_X}{\sigma_X}$ and $I_Y = \frac{\theta_Y}{\sigma_Y}$.\footnote{Using the test statistics of \citet{sen_testing_2014}, the critical values satisfy $c^\alpha_{Y,n} = c^\alpha_{Y}/n$ and $c^\alpha_{X,n} = c^\alpha_{X}/n$. Consequently, the corresponding standardized critical values satisfy $q_{X,n}^\alpha-q_{Y,n}^\alpha = o(1)$, so asymptotically the contribution of the critical values is negligible, and population effect-size asymmetry is determined by $I_X$ and $I_Y$.} We then examine how effect-size asymmetry is reflected in finite samples via the distributions of the directional detectability indices $I_{X,n}$ and $I_{Y,n}$ across Monte Carlo replications. As misspecification increases, the overlap between the sampling distributions of $I_{X,n}$ and $I_{Y,n}$ grows, leading to a gradual decline in accuracy of the CDSP directional estimate and, under severe misspecification, a breakdown of the effect-size asymmetry assumption. In contrast, LiNGAM exhibits abrupt failure to detect the causal direction even under mild misspecification. These simulation results indicate that CDSP is more robust to misspecification than LiNGAM.

We generate bivariate datasets under the true causal direction $X \rightarrow Y$, considering settings that either satisfy the model assumptions or exhibit six increasing levels of linear model misspecification. Data are generated according to

\begin{equation}
    \label{eq:misspec_setting}
    Y = \mathrm{sign}(X-a)\,|X-a|^{d}\beta + \eta,
\end{equation}

\begin{figure}[!ht]
    \centering
    \includegraphics[width=0.5\textwidth]{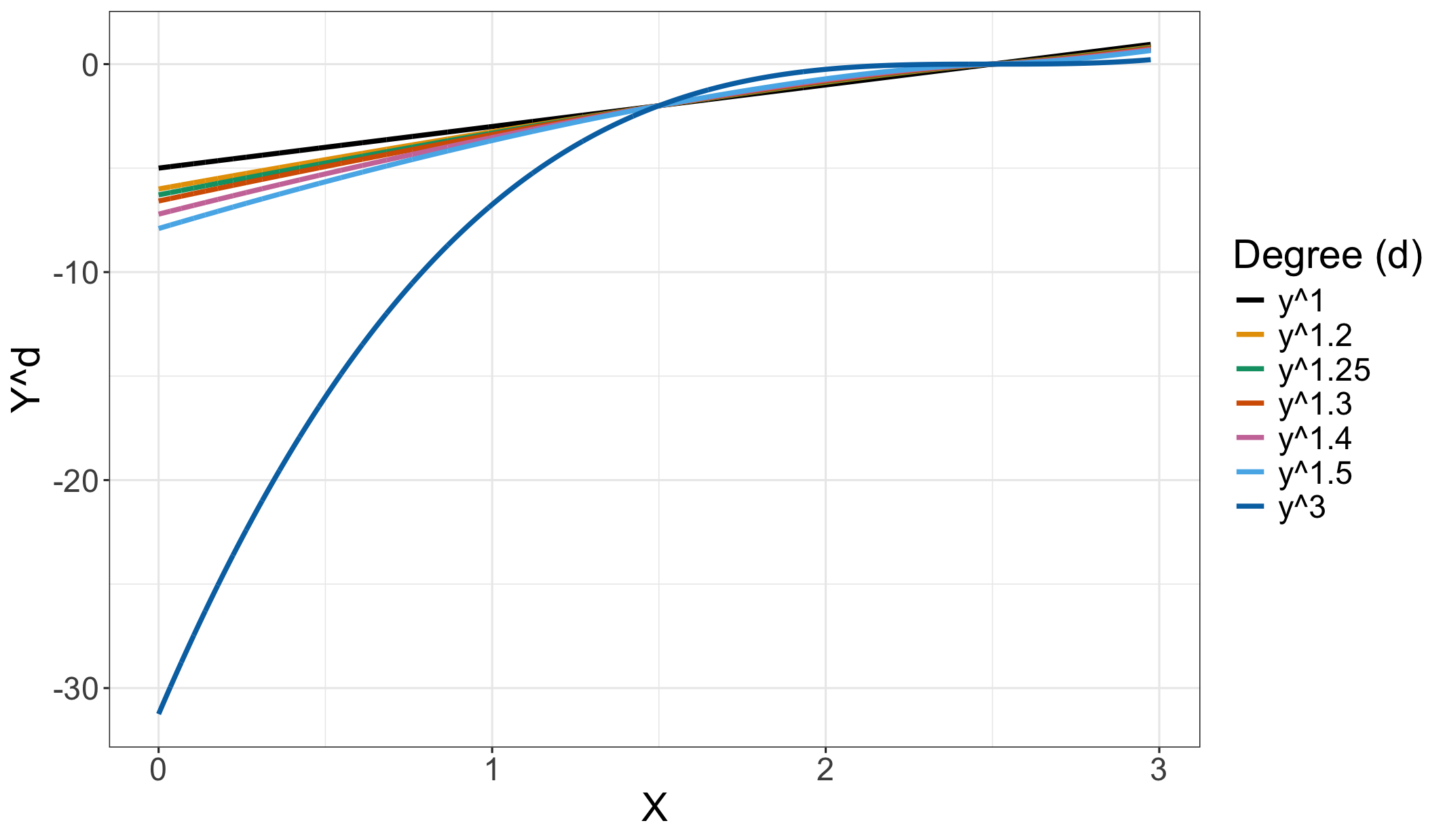}
    \caption{Settings of linearity
    }
    \label{fig:sim_polynomials}
\end{figure}

where the exponent $d$ controls the degree of nonlinearity. As illustrated in Figure~\ref{fig:sim_polynomials}, we consider $d \in \{1, 1.2, 1.25, 1.3, 1.4, 1.5, 3\}$ where $d = 1$ corresponds to the linear case and larger values represent progressively stronger departures from linearity. All other assumptions, aside from linearity, are satisfied. We sample $\eta$ from a Gaussian Mixture Model (GMM) with 3 mixtures. We sample $X$ from an exponential distribution truncated to the interval $(0, 3)$, ensuring that $Y$ is non-Gaussian. As seen in Figure~\ref{fig:sim_polynomials}, the case $d = 3$ represents a regime of severe model misspecification that lies outside the settings for which CDSP is primarily designed (Remark~\ref{rem:asymptotics_relevance}). Nonetheless, we discuss the case of $d = 3$ to highlight a regime in which the effect-size asymmetry assumption breaks down. 

For each nonlinearity level, we generate $M = 100$ datasets of size $N = 3000$ from the true data-generating process in Equation~\eqref{eq:misspec_setting}. For each dataset, we obtain a point estimate of the causal direction using CDSP (Algorithm~\ref{alg:CDSP_procedure}) with the test statistic of \citet{sen_testing_2014}, based on the Hilbert–Schmidt Independence Criterion (HSIC). We compare the CDSP point estimates of causal direction to those obtained from DirectLiNGAM \citep{shimizu2011directlingam}. To ensure comparability, we use HSIC as the independence measure within DirectLiNGAM.

To evaluate the population effect-size asymmetry assumption (Assumption~\ref{assump:effect_asymmetry}), we compute the required population quantities. We use a large Monte Carlo sample of size $10^5$ to estimate $\theta(X,\epsilon)$ and $\theta(Y,\delta)$. We estimate $\sigma_{1b,Y}$ and $\sigma_{1c,X}$ as $\sqrt{N}$ times the sample standard deviation of the test statistics $T_{1b,Y}$ and $T_{1c,X}$ across the $M=100$ Monte Carlo replications, where $N=3000$ is the sample size of each dataset. We compute accuracy of CDSP directional estimate as the empirical fraction of Monte Carlo replications in which the estimated ordering supports the direction favored by the population effect-size asymmetry assumption. We compute accuracy of DirectLiNGAM as the proportion of replications in which DirectLiNGAM correctly detects $X \to Y$. When the effect-size asymmetry assumption holds, accuracy of CDSP directional estimate reduces to the proportion of replications in which the correct direction is selected. Further details on the simulation setup are in the Appendix.

Figure~\ref{fig:CDSP_plots} illustrates the sampling distributions of directional detectability indices $I_{X,n}$ and $I_{Y,n}$ along with their corresponding population values ($I_X$ and $I_Y$), while Table~\ref{tab:pcdsp_linboot_compare} reports accuracies of the CDSP and LiNGAM directional estimates across mild to moderate misspecification settings $\left(d \in \{1, 1.2, 1.25, 1.3, 1.4, 1.5\}\right)$. In all cases, the effect-size asymmetry assumption holds, so both CDSP and LiNGAM accuracies are proportions of replications selecting the correct direction. 

\begin{table}[!ht]
\small
\centering
\begin{tabular}{lcc}
\toprule
& \multicolumn{2}{c}{\textbf{Accuracy}} \\
\cmidrule(lr){2-3}
Degree ($d$) & \textbf{CDSP} & \textbf{LiNGAM} \\
\midrule
$d=1$    & $100\%$ & $100\%$ \\
$d=1.2$  & $100\%$ & $100\%$ \\
$d=1.25$ & $100\%$ & $0\%$ \\
$d=1.3$  & $100\%$ & $0\%$ \\
$d=1.4$  & $88\%$  & $0\%$ \\
$d=1.5$  & $57\%$  & $0\%$ \\
\bottomrule
\end{tabular}
\caption{Accuracies of CDSP and LiNGAM directional estimates for polynomial degrees $d \in \{1, 1.2, 1.25, 1.3, 1.4, 1.5\}$.}
\label{tab:pcdsp_linboot_compare}
\end{table}

\begin{figure}[!ht]
    \centering
    \includegraphics[width=0.8\textwidth]{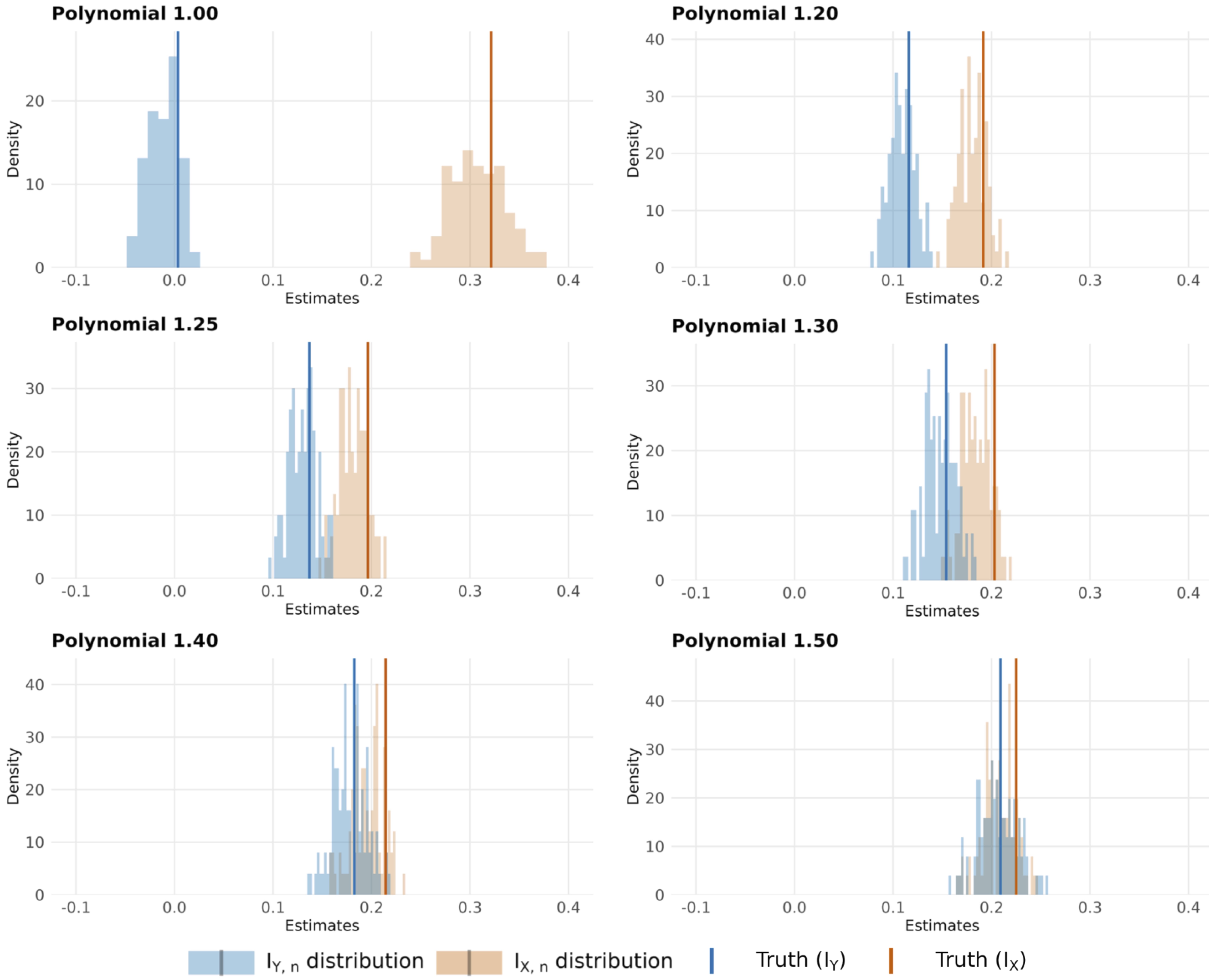}
    \caption{
    Distribution of the estimated directional detectability indices $I_{Y,n}$ (blue) and $I_{X,n}$ (orange) under varying degrees of nonlinearity. The top-left panel corresponds to the linear case ($d=1$). The remaining panels correspond to $d = 1.2$ (top-right), $1.25$ (middle-left), $1.3$ (middle-right), $1.4$ (bottom-left), and $1.5$ (bottom-right), representing progressively stronger departures from linearity. Solid blue and orange vertical lines indicate population values $I_{Y}$ and $I_{X}$, respectively. 
    }
    \label{fig:CDSP_plots}
\end{figure}

When linearity holds ($d = 1$, top left in Figure~\ref{fig:CDSP_plots} and first row of Table~\ref{tab:pcdsp_linboot_compare}), the sampling distributions of $I_{Y,n}$ and $I_{X,n}$ are well separated. In this setting, both CDSP and LiNGAM also correctly identify the direction across all $M$ replications. Because $H_Y^0$ holds, the corresponding population departure from the null, $\theta_Y$, is zero. In contrast, $H_X^0$ does not hold, so the corresponding population departure from the null, $\theta_X$, is strictly positive. This pronounced asymmetry induces a clear separation between the population values $I_X$ and $I_Y$, which is reflected in the well-separated sampling distributions of the directional detectability indices.

As the degree of misspecification increases from $d = 1$ to $d = 1.5$, the separation between the population values $I_X$ and $ I_Y$ decreases, and the sampling distributions of the directional detectability indices $I_{X,n}$ and $I_{Y,n}$ move closer together. Beginning at $d = 1.25$, the two distributions begin to overlap, and LiNGAM incorrectly identifies the direction across all replications, reflecting a breakdown of the traditional asymmetry assumption (Assumption~\ref{assump:trad_asymmetry}). Despite the slight overlap for $d \in \{1.25, 1.3\}$, CDSP accuracy remains at 100\%, indicating that the ordering $I_{X,n} > I_{Y,n}$ is preserved across Monte Carlo replications. This demonstrates that, under mild model misspecification, CDSP remains substantially more robust than LiNGAM.

At higher levels of misspecification $\left(d \in \{1.4, 1.5\}\right)$, the separation between the population values $I_X$ and $I_Y$ decreases further. The sampling distributions of the directional detectability indices $I_{X,n}$ and $I_{Y,n}$ exhibit substantial overlap across Monte Carlo replications, and accuracy of CDSP directional estimate decreases correspondingly. In these settings, sampling variability becomes large relative to the population gap between $I_{X}$ and $I_{Y}$, so the ordering $I_{X,n} > I_{Y,n}$ is no longer preserved uniformly across datasets. Nevertheless, CDSP continues to exhibit stronger directional performance than LiNGAM, which incorrectly concludes $Y \to X$ in all $M$ replications.

For the severe model misspecification case of $d = 3$, we observed that, unlike the previous settings $\left(d \in \{1, 1.2, 1.25, 1.3, 1.4, 1.5\}\right)$, the effect-size asymmetry assumption no longer holds. In this setting, CDSP continues to favor the direction implied by effect-size asymmetry; however, because this asymmetry is misaligned with the true causal direction, it results in incorrect inference relative to the ground truth $X \to Y$. As a result, both CDSP and LiNGAM favor the incorrect direction ($Y \to X$) across all Monte Carlo replications. We provide additional details for the $d=3$ case in the Appendix.

As discussed in Section~\ref{sec:methods_CDSP}, $\widehat{P}_{\text{CDSP}}$ measures the directional support for the estimated causal direction, defined as the proportion of bootstrap samples favoring the directional estimate. Similarly, LiNGAM bootstrap rates~\citep{thamvitayakul2012bootstrap} quantify the proportion of bootstrap resamples favoring the estimated direction. The accuracies of the CDSP and LiNGAM directional estimates reported in Table~\ref{tab:pcdsp_linboot_compare} are directly related to $\widehat{P}_{\text{CDSP}}$ and LiNGAM bootstrap rates, respectively: when the estimated direction is the true causal direction, $\widehat{P}_{\text{CDSP}}$ and the LiNGAM bootstrap rate correspond to accuracy, whereas when the estimated direction is incorrect, one minus the corresponding value reflects accuracy. For CDSP and LiNGAM, the effect-size asymmetry assumption and the traditional asymmetry assumption, respectively, characterize the theoretical conditions under which the estimated direction is expected to align with the true causal direction.

To assess the role of sample size, we increase $N$ from $3000$ to $6000$ and then to $9000$ in the $d \in \{1.4,1.5\}$ settings, where accuracy of CDSP directional estimate is below 100\%. LiNGAM is excluded from this study because its accuracy is consistently 0\% in these settings, indicating failure under this level of misspecification; as a result, increasing the sample size would not provide any additional insights into its performance. As shown in Table~\ref{tab:pCDSP_by_N}, for $d=1.4$, the accuracy increases from $88\%$ to $91\%$ and then to $94\%$, while for $d=1.5$ it increases from $57\%$ to $62\%$ and then to $65\%$. This behavior is consistent with Proposition~\ref{prop:boot_prob_consistency}: as the sample size increases, the estimates of the directional detectability indices $I_{Y,n}$ and $I_{X,n}$ concentrate more tightly around their population values $I_{Y}$ and $I_{X}$, reducing sampling variability and stabilizing their ordering across Monte Carlo replications. Consequently, when the effect-size asymmetry assumption holds, increasing the sample size can strengthen directional stability and improve accuracy of CDSP estimate. In settings where the effect-size asymmetry assumption holds but the directional signal is weak (e.g., $d=1.5$), this consistency may manifest in finite samples as an increasing trend in $\widehat{P}_{\mathrm{CDSP}}$ with sample size, as the ordering between $I_{Y,n}$ and $I_{X,n}$ stabilizes, thereby providing further evidence in favor of the correct direction.

\begin{table}[!ht]
\small
\centering
\begin{tabular}{lccc}
\toprule
Degree ($d$) & \textbf{$N=3000$} & \textbf{$N=6000$} & \textbf{$N=9000$} \\
\midrule
$d=1.4$ & 88\% & 91\% & 94\% \\
$d=1.5$ & 57\% & 62\% & 65\% \\
\bottomrule
\end{tabular}
\caption{Accuracy of CDSP directional estimate for increasing sample size $N$ for $d = \{1.4, 1.5\}$.}
\label{tab:pCDSP_by_N}
\end{table}

In summary, our simulations demonstrate CDSP's robustness to mild to moderate model misspecification and show that accuracy of CDSP directional estimate slightly declines as misspecification increases from mild to moderate. We additionally illustrate that CDSP outperforms LiNGAM across settings with varying degrees of nonlinearity. Importantly, the decline in CDSP accuracy reflects increased overlap in the sampling distributions of directional detectability indices $I_{Y,n}$ and $I_{X,n}$, rather than a structural failure of the method. We also demonstrate that increasing the sample size mitigates this effect: larger samples can strengthen directional stability when the effect-size asymmetry assumption holds. In contrast, LiNGAM fails to infer the correct direction even under mild misspecification, with accuracy dropping abruptly to 0\%.

\subsection{Real Data}
\label{subsec:real_data}

We apply CDSP as described in Algorithm~\ref{alg:CDSP_procedure} to the Tübingen Cause–Effect Pairs benchmark of 100 real-world univariate cause–effect pairs \citep{mooij2016distinguishing}, and compare the resulting direction estimates and associated uncertainty to those obtained from LiNGAM. For CDSP, we quantify uncertainty via the directional strength probability $\widehat{P}_{\mathrm{CDSP}}$. For LiNGAM, we use bootstrap selection rates as suggested by~\citep{thamvitayakul2012bootstrap} and commonly used in practice (e.g.,~\cite{motokawa2020causal,rosenstrom2012pairwise}).

As noted by the dataset’s authors, the Tübingen pairs are inherently heterogeneous: although each is believed to reflect a genuine causal relationship, the presumed ground-truth directions have not been established through intervention and hidden confounding or selection bias may be present~\citep{mooij2016distinguishing}. This heterogeneity is evident in the pairs' scatterplots that show a range of functional forms---from approximately linear relationships to strongly nonlinear ones. 

We consider two evaluation settings across pairs: the full set of 100 pairs and a subset of approximately linear pairs (i.e., pairs that do not exhibit clear violations of the linearity assumption). Given the substantial heterogeneity across all pairs, both LiNGAM and CDSP may struggle to reliably infer directionality on the full dataset. We expect better performance by both LiGNAM and CDSP on the subset of approximately linear pairs which reflects settings where both methods might reasonably be applied in practice. In addition, we further examine three pairs to better understand the comparative performance of the two procedures across different regimes: a pair where both CDSP and LiNGAM correctly identify the direction; a pair where LiNGAM selects the incorrect direction while CDSP correctly infers the direction with strong directional strength $\widehat{P}_{\mathrm{CDSP}}$; and a setting in which LiNGAM again selects the incorrect direction while CDSP correctly infers the direction but with weak directional strength.

To identify approximately linear relationships, we compare a linear regression model to a flexible spline-based generalized additive model (GAM) following standard practice in the GAM literature \citep{wood2017gam,ellison2022gam_bic}. For each pair $(X,Y)$ we fit (i) a linear regression model, $Y = \beta_0 + \beta_1 X + \varepsilon$ and (ii) a semiparametric GAM, $Y = \beta_0 + f(X) + \varepsilon$, where $f$ is a smooth function estimated using penalized regression splines with basis dimension $k=10$ \citep{wood2003thinplate,wood2017gam}. GAM Smoothing parameters are estimated using maximum likelihood to allow likelihood-based comparison across models~\citep{wood2017gam}. We compare the two models using the Bayesian Information Criterion (BIC) \citep{schwarz1978bic} and classify a pair as approximately linear when the linear model achieves lower BIC~\citep{kutner2005applied}.

We assess performance of LiNGAM and CDSP on the benchmark pairs using two metrics: the \emph{true discovery rate (TDR)} and \emph{false discovery rate (FDR)}. Following standard usage in statistical model evaluation \citep[e.g.,][]{powers2011evaluation, fawcett2006introduction}, we define
\[
\mathrm{TDR} = \frac{\text{number of correct decisions}}{\text{number of all decisions}},
\qquad
\mathrm{FDR} = \frac{\text{number of incorrect decisions}}{\text{number of all decisions}}.
\]

\begin{table}[!ht]
\small
\centering
\caption{Comparison of CDSP and LiNGAM on all 100 benchmark pairs. 
}
\label{tab:real_data_res}

\begin{subtable}{\textwidth}
\centering
\caption{Performance metrics across all 100 pairs.}
\begin{tabular}{@{}lll@{}}
\toprule
Metric                                  & CDSP          & LiNGAM        \\ \midrule
True Discovery Rate    & 62/100 = 62\% & 44/100 = 44\% \\
False Discovery Rate   & 38/100 = 38\% & 56/100 = 56\% \\
\bottomrule
\end{tabular}
\end{subtable}

\vspace{0.8em}

\begin{subtable}[t]{\textwidth}
\centering
\caption{Contingency table comparing causal discovery decisions across all 100 pairs.}
\begin{tabular}{@{}lccc@{}}
\toprule
                & LiNGAM Right & LiNGAM Wrong & Total \\ \midrule
CDSP Right        & 43           & 19            & 62    \\
CDSP Wrong        & 1            & 37           & 38    \\
Total             & 44          & 56           & 100    \\
\bottomrule
\end{tabular}
\end{subtable}

\begin{subtable}[t]{\textwidth}
\centering
\caption{False discovery rates by $P_{\text{CDSP}}$-defined support categories across all 100 pairs.}
\label{tab:fdr_support_all}
\begin{tabular}{@{}lcc@{}}
\toprule
Support Category & CDSP FDR & LiNGAM FDR \\ \midrule
No/little  & 2/3 = 66.7\%  & 2/8 = 25\% \\
Weak  & 18/45 = 40\% & 13/15 = 86.7\% \\
Moderate  & 4/8 = 50\%  & 7/10 = 70\% \\
Strong & 8/17 = 47.1\% & 6/10 = 60\% \\
Very strong & 6/27 = 22.2\% & 28/57 = 49.1\% \\ \midrule
Total & 38/100 = 38\% & 56/100 = 56\% \\
\bottomrule
\end{tabular}
\end{subtable}

\end{table}

\begin{table}[!ht]
\small
\centering
\caption{Comparison of CDSP and LiNGAM on 34 approximately linear pairs.}
\label{tab:real_data_sub}

\begin{subtable}[t]{\textwidth}
\centering
\caption{Performance metrics across a subset of 34 approximately linear pairs.}
\begin{tabular}{@{}lll@{}}
\toprule
Metric                                  & CDSP          & LiNGAM        \\ \midrule
True Discovery Rate  & 27/34 = 79.4\% & 21/34 = 61.8\% \\
False Discovery Rate & 7/34 = 20.6\% & 13/34 = 38.2\% \\
\bottomrule
\end{tabular}
\label{tab:linear_real_metrics}
\end{subtable}

\vspace{1em}

\begin{subtable}[t]{\textwidth}
\small
\centering
\caption{Contingency table comparing causal discovery decisions across 34 approximately linear pairs.}
\label{tab:sub-contingency}
\begin{tabular}{@{}lccc@{}}
\toprule
                & LiNGAM Right & LiNGAM Wrong & Total \\ \midrule
CDSP Right        & 21           & 6            & 27    \\
CDSP Wrong        & 0            & 7           & 7    \\
Total             & 21           & 13           & 34    \\
\bottomrule
\end{tabular}
\end{subtable}

\begin{subtable}[t]{\textwidth}
\small
\centering
\caption{False discovery rates by $P_{\text{CDSP}}$-defined support categories across 34 approximately linear pairs.}
\label{tab:fdr_support_linear}
\begin{tabular}{@{}lcc@{}}
\toprule
Support Category & CDSP FDR & LiNGAM FDR \\ \midrule
No/little& 1/1 = 100.0\% & 1/1 = 100.0\% \\
Weak& 4/9 = 44.4\% & 7/10 = 70.0\% \\
Moderate& 2/5 = 40.0\%  & 0/2 = 0.0\% \\
Strong& 0/7 = 0.0\%   & 2/4 = 50.0\% \\
Very strong& 0/12 = 0.0\%  & 3/17 = 17.6\% \\ \midrule
Total & 7/34 = 20.6\% & 13/34 = 38.2\% \\
\bottomrule
\end{tabular}
\end{subtable}
\end{table}

We present results across all 100 benchmark pairs in Table~\ref{tab:real_data_res}. Across the 100 real-data examples, CDSP achieves a substantially lower false discovery rate (38\% vs. 56\%) and a higher true discovery rate (62\% vs. 44\%). When the two methods disagree, CDSP is correct far more often: it is correct in 19 cases where LiNGAM is wrong, whereas LiNGAM is correct in only one case where CDSP is wrong. 

In Table~\ref{tab:fdr_support_all}, we stratify both $P_{\text{CDSP}}$ and LiNGAM bootstrap rates into the directional support categories defined in Table~\ref{tab:pCDSP_interpretation}. We use the $P_{\text{CDSP}}$-defined bins since no established interpretive scale exists for LiNGAM bootstrap rates. For a reliable method, we would expect false discovery rates to decrease as directional support increases, with low-support categories exhibiting higher error rates and high-support categories exhibiting lower error rates. We observe that LiNGAM-based bootstap rates provide `very strong' directional support to more than half of the pairs, yet LiNGAM exhibits a high false discovery rate within this category (49.1\%, compared to 22.2\% for CDSP). In contrast, CDSP exhibits higher false discovery rates in lower-support categories and substantially lower false discovery rates in higher-support categories. Thus, results in Table~\ref{tab:fdr_support_all} indicate that $P_{\text{CDSP}}$ provides a more meaningful measure of uncertainty quantification than LiNGAM's bootstrap rate.

Table~\ref{tab:real_data_sub} presents results for the approximately linear pairs. On this subset, as expected, classification performance improves for both methods; however, CDSP maintains a consistent advantage. CDSP again exhibits a lower false discovery rate (20.6\% vs. 38.2\%) and a higher true discovery rate (79.4\% vs. 61.8\%). In cases of disagreement, CDSP is correct 6 times, while LiNGAM is never correct when CDSP is wrong. In Table~\ref{tab:fdr_support_linear}, we again stratify both $P_{\text{CDSP}}$ and LiNGAM bootstrap rates into the directional support categories defined in Table~\ref{tab:pCDSP_interpretation}. We observe that false discovery rates are lower for both CDSP and LiNGAM in this subset compared to the full dataset. However, CDSP continues to show lower false discovery rates than LiNGAM, particularly in the higher-support categories, providing us with uncertainty quantification that is more meaningful than LiNGAM's bootstrap rates.

\begin{figure}[!ht]
    \centering
    \includegraphics[width=0.8\textwidth]{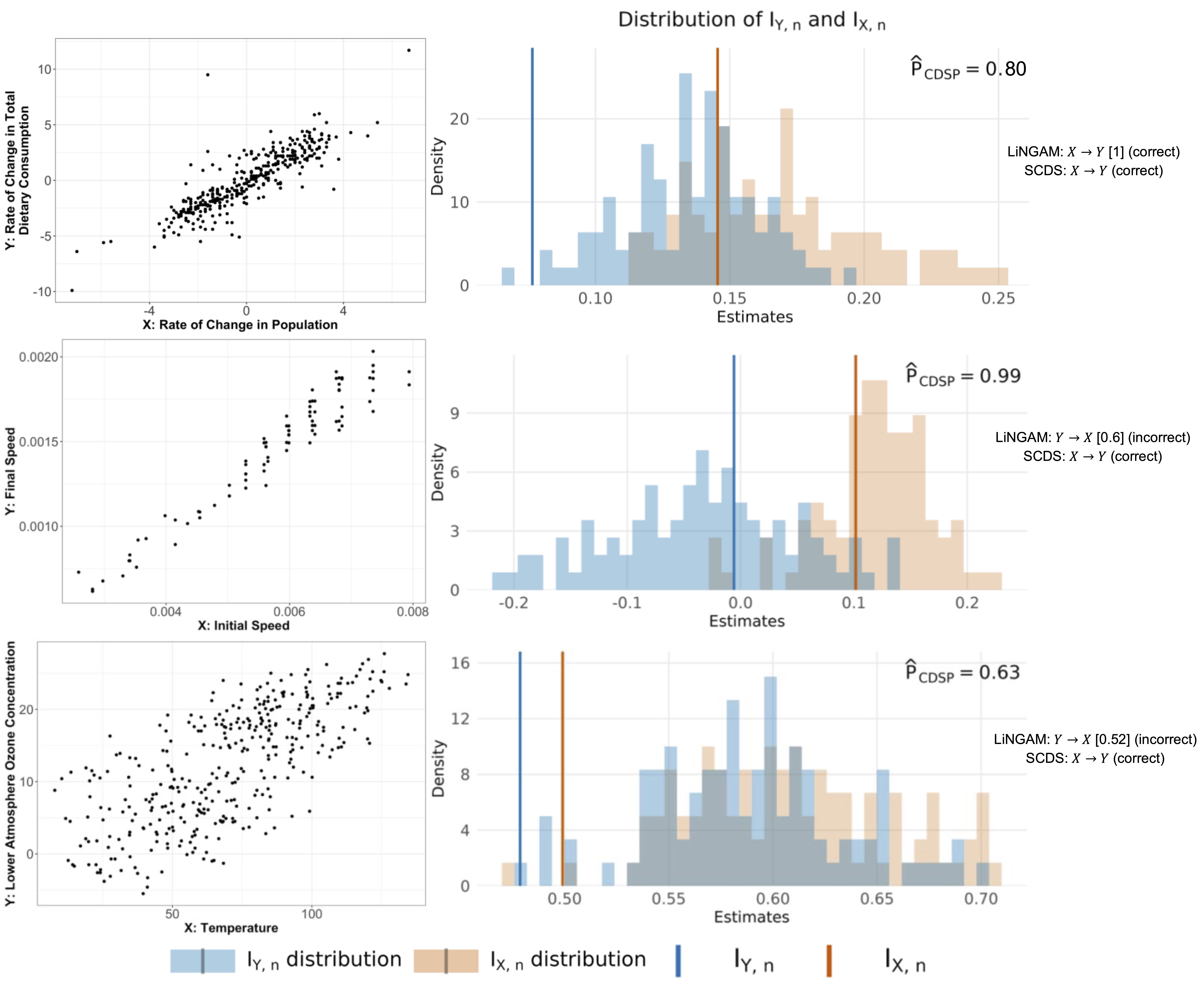}
    \caption{Rows 1–3 correspond to the Population and Dietary Consumption, Balltrack, and Ozone–Temperature datasets, respectively. The first column shows scatterplots of the observed data. The second column shows histograms of the bootstrap distributions of $I_{Y,n}$ (blue) and $I_{X,n}$ (orange); the solid vertical lines mark the corresponding estimates computed on the full dataset. Each histogram also reports $\widehat{P}_{\text{CDSP}}$. To the right of each histogram, we report the overall LiNGAM and CDSP results; LiNGAM bootstrap rates for the inferred direction are shown in brackets.}
    \label{fig:real_res}
\end{figure} 

In Figure~\ref{fig:real_res}, we examine three approximately linear pairs, selected to highlight distinct scenarios in which CDSP and LiNGAM differ in performance. The first pair relates population growth to food consumption and contains 347 observations from 174 countries or regions, covering two time periods: 1990–1992 to 1995–1997 and 1995–1997 to 2000–2002 (with one missing entry). The data, collected by the Food and Agriculture Organization of the United Nations, record the average annual rate of change in population ($X$) and total dietary consumption ($Y$). Following \citet{mooij2016distinguishing}, we treat the causal direction as $X \to Y$. The scatterplot in Figure~\ref{fig:real_res} suggests an approximately linear relationship between the two variables. In this well-behaved setting, both LiNGAM and CDSP recover the correct direction. CDSP yields $\widehat{P}_{\mathrm{CDSP}} = 0.80$, while LiNGAM yields a bootstrap rate of 1. As indicated in Table~\ref{tab:pCDSP_interpretation}, this value of $\widehat{P}_{\mathrm{CDSP}}$ corresponds to strong directional support for $X \to Y$.

The second pair relates the initial and final speeds of a rolling ball and consists of 94 measurements collected for the Tübingen Cause–Effect Pairs benchmark suite~\citep{mooij2016distinguishing}. The data were obtained using a physical ball track equipped with two light barrier sensors, where $X$ records the initial speed and $Y$ records the speed at a later position along the track. The causal direction is taken to be known, with initial speed causing final speed. The scatterplot (Figure~\ref{fig:real_res}) suggests an approximately linear relationship. Despite this apparent linearity, LiNGAM infers the incorrect direction, with a bootstrap rate of 0.6. In contrast, CDSP correctly identifies the direction, with ($\widehat{P}_{\mathrm{CDSP}} = 0.99$), corresponding to very strong support (Table~\ref{tab:pCDSP_interpretation}). Here, we observe a notable difference in variability between $X$ and $Y$, which may affect the variability of the corresponding test statistics in each direction. CDSP accounts for such variability in the effect-size asymmetry assumption. In contrast, LiNGAM does not explicitly account for this in the traditional asymmetry assumption, which may contribute to the discrepancy in performance, with CDSP correctly identifying the direction while LiNGAM does not.

The third pair relates ozone concentration to temperature and consists of 365 daily measurements of mean temperature and lower-atmosphere ozone concentration recorded in Lausanne-César-Roux, Switzerland in 2009~\citep{adminAirData}. Following \citet{mooij2016distinguishing}, we treat the causal direction as known, with changes in temperature ($X$) driving changes in ozone concentration ($Y$). The scatterplot (Figure~\ref{fig:real_res}) suggests an approximately linear relationship, making this a setting in which LiNGAM would ordinarily be expected to perform well. Despite this, LiNGAM infers the incorrect direction, with a bootstrap rate of 0.52. In contrast, CDSP correctly identifies the direction, with $\widehat{P}_{\mathrm{CDSP}} = 0.63$, indicating weak directional support (Table~\ref{tab:pCDSP_interpretation}). The weak support may in part be due to high variability in $X$, which can reduce power in the correct direction and diminish the separation between $\hat{I}_{Y,n}$ and $\hat{I}_{X,n}$. As suggested by Proposition~\ref{prop:boot_prob_consistency}, if the effect-size asymmetry assumption holds, increasing the sample size would be expected to strengthen the directional evidence and drive $\widehat{P}_{\mathrm{CDSP}}$ closer to one. Consistent with the patterns observed in Figure~\ref{fig:real_res}, the bootstrap distributions of the directional detectability index estimates $\hat{I}_{Y,n}$ and $\hat{I}_{X,n}$ are left-skewed, with the original-sample estimates appearing toward the lower tail. This suggests substantial variability in the sampling distributions, which can reduce the separation between $\hat{I}_{Y,n}$ and $\hat{I}_{X,n}$ and thereby contribute to the weak directional support. Another possible contributing factor is that the nonparametric bootstrap relies on an i.i.d.\ sampling assumption~\citep{EfroTibs93}, which may be imperfectly satisfied in this dataset, potentially affecting stability of the bootstrap estimates.

In sum, CDSP yields materially improved performance across both the full set of 100 benchmark pairs and the subset of 34 approximately linear pairs, reducing the false discovery rate by approximately 18\% in both settings relative to LiNGAM. We observe that LiNGAM bootstrap assigns very strong support to a large fraction of pairs, yet exhibits relatively high false discovery rates even within this category, particularly across all 100 pairs where assumption violations are more likely. In contrast, \(P_{\text{CDSP}}\) spans a broader range of values across support categories, with higher false discovery rates in lower-support categories and substantially lower false discovery rates in higher-support categories. These patterns indicate that larger values of \(\widehat{P}_{\mathrm{CDSP}}\) are associated with lower error rates, suggesting that \(\widehat{P}_{\mathrm{CDSP}}\) provides uncertainty quantification, especially under model misspecification. In addition, the three representative pairs illustrate how CDSP distinguishes between strong and weak evidence across settings, correctly inferring the causal direction with high support in settings where model assumptions are approximately satisfied and weaker support when the evidence is less clear.

\section{Discussion}
\label{sec:discussion}

In this work, we introduce Causal Discovery via Statistical Power (CDSP), a framework that provides a statistically principled basis for causal direction estimation. Existing methods often focus on identifiability under idealized modeling assumptions and pay limited attention to uncertainty quantification and robustness under model misspecification. In contrast, CDSP places these aspects at the center of the framework by characterizing when the data contain sufficient information to favor one causal direction over the other. CDSP reframes causal discovery as a problem of directional detectability through notions of power and effect size within a hypothesis-testing–based framework.

Through simulation studies involving linear model misspecification, we demonstrate that statistical inference via CDSP yields robust direction estimates and meaningful uncertainty quantification, while improving recovery of true causal direction relative to the commonly used LiNGAM. In addition, CDSP provides diagnostic insight into potential assumption violations through the underlying directional hypothesis test outcomes (Equation~\ref{eq:h_indep}). In particular, cases in which both directional hypotheses are either rejected or not rejected may indicate potential model misspecification or identifiability issues, respectively.

The most important practical consideration for using CDSP is that it relies on the effect-size asymmetry assumption (Assumption~\ref{assump:effect_asymmetry}), which represents a first step toward elucidating how effect size influences causal direction detection. This assumption cannot be directly verified in practice, since the true data-generating mechanism is unknown. However, it is weaker than the traditional asymmetry assumption (Assumption~\ref{assump:trad_asymmetry}) implicitly relied upon in functional causal discovery methods, which likewise cannot be verified in practice. The strength of the effect-size asymmetry depends on the choice of a test statistic, as different measures of dependence can capture deviations from the null to varying degrees. In this work, we demonstrate CDSP using the test statistic of \citet{sen_testing_2014}, which is based on the Hilbert--Schmidt independence criterion \citep{gretton_kernel_2008}. In our simulation studies, we observe that the effect-size asymmetry may fail under severe model misspecification but tends to hold under mild to moderate deviations. In practice, we therefore recommend applying standard model checking procedures appropriate to the assumed model class---e.g., linearity diagnostics for linear models~\citep{weisberg2005applied, kutner2005applied}---to avoid settings with severe misspecification, where the effect-size asymmetry assumption may not hold. Further evaluation of the effect-size asymmetry assumption---e.g., across different test statistics and more extensive simulation settings---remains an important direction for future work.

Our practical recommendation to ensure appropriateness of the assumed model class is confirmed by the real data analyses of the Tübingen cause–effect pairs~\citep{mooij2016distinguishing}. Thus, identifying 34 approximately linear pairs led to substantial performance improvements for both CDSP and LiNGAM. In particular, CDSP’s false discovery rate decreased by 17.4\% relative to its performance on the full set of 100 pairs. In addition, CDSP performed uniformly better than LiNGAM on real data, with its false discovery rate being approximately 18\% lower than LiNGAM's for both the full set of 100 cause-effect benchmark pairs and for the subset of 34 approximately linear pairs. We should note that our numerical findings on the Tübingen cause–effect pairs come with the caveat that, although widely treated as known for demonstration purposes in the causal discovery literature (e.g.,~\cite{peters2017elements, compton2020entropic, kocaoglu2020applications, marx2017telling}), the ground-truth causal directions are not known with complete certainty.

Another practical consideration in CDSP is the choice of resampling method. We use the nonparametric bootstrap to estimate the quantities required to infer causal directionality via effect-size asymmetry and \(P_{\mathrm{CDSP}}\), and establish consistency of these estimates in Section~\ref{subsec:CDSP_estimate}. Prior work has identified limitations of the nonparametric bootstrap in specific settings, such as when particular test statistics or model selection criteria are evaluated on bootstrap samples, or when combined with causal discovery followed by downstream causal effect estimation \citep{chang2026post, janitza2016pitfalls}. Even though these existing results pertain to different inferential targets, investigating how alternative resampling approaches affect CDSP performance and theoretical guarantees could be an important direction for future work.

We note that CDSP’s notion of ``directional power"---defined as the probability of correctly detecting the causal direction---is conceptually related to classical power in hypothesis testing, but is used here to characterize directional detectability rather than to perform classical power analysis~\citep{cohen_power, lehmann2005testing, casella2021statistical}. Our framework does not naturally lend itself to classical power analysis for causal discovery because such analyses would require specifying directional effect sizes, thereby assuming the causal direction---the very quantity the procedure seeks to infer.

In this paper, we illustrate CDSP in the context of linear model misspecification for bivariate causal discovery. However, CDSP applies more broadly to functional causal discovery methods, including ANMs and PNL models. In addition, beyond model misspecification, violations of other assumptions—such as unobserved confounding, cyclicity, or non-i.i.d.\ data—may also affect causal direction estimation. Understanding how such violations influence the effect-size asymmetry underlying CDSP remains an important direction for future work. In particular, studying how different types of assumption violations—individually or in combination—affect the effect-size asymmetry, and therefore the ability of CDSP to correctly identify the causal direction, is an open problem. Finally, in this work, we study CDSP in the bivariate setting. Extending the framework to higher-dimensional settings and developing scalable approaches for such applications are important directions for future research.

Overall, this paper takes a first step toward a power-motivated statistical framework for causal discovery. We encourage researchers use this method to learn when the data provide sufficient evidence to favor a causal direction, even in the presence of model misspecification.

\section*{Acknowledgments}
The authors thank participants of the University of Washington's Statistical and Machine Learning Approaches for the Social Sciences working group for their feedback and support. Shreya Prakash received partial support from the Center for Statistics and the Social Sciences (CSSS).

\bibliographystyle{plainnat}
\bibliography{causal-discovery}

@article{rosenstrom2012pairwise,
  title={Pairwise measures of causal direction in the epidemiology of sleep problems and depression},
  author={Rosenstr{\"o}m, Tom and Jokela, Markus and Puttonen, Sampsa and Hintsanen, Mirka and Pulkki-R{\aa}back, Laura and Viikari, Jorma S and Raitakari, Olli T and Keltikangas-J{\"a}rvinen, Liisa},
  journal={PloS one},
  volume={7},
  number={11},
  pages={e50841},
  year={2012},
  publisher={Public Library of Science San Francisco, USA}
}

@incollection{gretton_kernel_2008,
	title = {A {Kernel} {Statistical} {Test} of {Independence}},
	url = {http://papers.nips.cc/paper/3201-a-kernel-statistical-test-of-independence.pdf},
	urldate = {2019-05-14},
	booktitle = {Advances in {Neural} {Information} {Processing} {Systems} 20},
	publisher = {Curran Associates, Inc.},
	author = {Gretton, Arthur and Fukumizu, Kenji and Teo, Choon H. and Song, Le and Schölkopf, Bernhard and Smola, Alex J.},
	editor = {Platt, J. C. and Koller, D. and Singer, Y. and Roweis, S. T.},
	year = {2008},
	pages = {585--592}
}

@article{sen_testing_2014,
	title = {Testing independence and goodness-of-fit in linear models},
	volume = {101},
	issn = {0006-3444},
	url = {https://academic.oup.com/biomet/article/101/4/927/1775047},
	doi = {10.1093/biomet/asu026},
	language = {en},
	number = {4},
	urldate = {2020-10-04},
	journal = {Biometrika},
	author = {Sen, A. and Sen, B.},
	month = dec,
	year = {2014},
	note = {Publisher: Oxford Academic},
	pages = {927--942}
}

@article{song2017tell,
  title={Tell cause from effect: models and evaluation},
  author={Song, Jing and Oyama, Satoshi and Kurihara, Masahito},
  journal={International Journal of Data Science and Analytics},
  volume={4},
  pages={99--112},
  year={2017},
  publisher={Springer}
}

@article{shimizu2006linear,
  title={A linear non-Gaussian acyclic model for causal discovery},
  author={Shimizu, Shohei and Hoyer, Patrik O and Hyv{\"a}rinen, Aapo and Kerminen, Antti},
  journal={Journal of Machine Learning Research},
  volume={7},
  number={Oct},
  pages={2003--2030},
  year={2006}
}

@incollection{hoyer2009nonlinear,
	title = {Nonlinear causal discovery with additive noise models},
	url = {http://papers.nips.cc/paper/3548-nonlinear-causal-discovery-with-additive-noise-models.pdf},
	urldate = {2018-03-07},
	booktitle = {Advances in {Neural} {Information} {Processing} {Systems} 21},
	publisher = {Curran Associates, Inc.},
	author = {Hoyer, Patrik O. and Janzing, Dominik and Mooij, Joris M and Peters, Jonas and Schölkopf, Bernhard},
	editor = {Koller, D. and Schuurmans, D. and Bengio, Y. and Bottou, L.},
	year = {2009},
	pages = {689--696}
}

@article{mooij2016distinguishing,
  title={Distinguishing cause from effect using observational data: methods and benchmarks},
  author={Mooij, Joris M and Peters, Jonas and Janzing, Dominik and Zscheischler, Jakob and Sch{\"o}lkopf, Bernhard},
  journal={Journal of Machine Learning Research},
  volume={17},
  number={32},
  pages={1--102},
  year={2016}
}

@Book{EfroTibs93,
  Title                    = {An Introduction to the Bootstrap},
  Author                   = {Bradley Efron and Robert J. Tibshirani},
  Publisher                = {Chapman \& Hall/CRC},
  Year                     = {1993},

  Address                  = {Boca Raton, Florida, USA},
  Number                   = {57},
  Series                   = {Monographs on Statistics and Applied Probability}
}

@article{shimizu2011directlingam,
  title={{DirectLiNGAM}: A direct method for learning a linear non-Gaussian structural equation model},
  author={Shimizu, Shohei and Inazumi, Takanori and Sogawa, Yasuhiro and Hyvarinen, Aapo and Kawahara, Yoshinobu and Washio, Takashi and Hoyer, Patrik O and Bollen, Kenneth and Hoyer, Patrik},
  journal={Journal of Machine Learning Research},
  volume={12},
  number={Apr},
  pages={1225--1248},
  year={2011}
}

@inproceedings{thamvitayakul2012bootstrap,
  title={Bootstrap confidence intervals in {DirectLiNGAM}},
  author={Thamvitayakul, Kittitat and Shimizu, Shohei and Ueno, Tsuyoshi and Washio, Takashi and Tashiro, Tatsuya},
  booktitle={2012 IEEE 12th International Conference on Data Mining Workshops},
  pages={659--668},
  year={2012},
  organization={IEEE}
}

@article{compton2020entropic,
  title={Entropic causal inference: Identifiability and finite sample results},
  author={Compton, Spencer and Kocaoglu, Murat and Greenewald, Kristjan and Katz, Dmitriy},
  journal={Advances in Neural Information Processing Systems},
  volume={33},
  pages={14772--14782},
  year={2020}
}

@article{kocaoglu2020applications,
  title={Applications of common entropy for causal inference},
  author={Kocaoglu, Murat and Shakkottai, Sanjay and Dimakis, Alexandros G and Caramanis, Constantine and Vishwanath, Sriram},
  journal={Advances in neural information processing systems},
  volume={33},
  pages={17514--17525},
  year={2020}
}

@inproceedings{marx2017telling,
  title={Telling cause from effect using MDL-based local and global regression},
  author={Marx, Alexander and Vreeken, Jilles},
  booktitle={2017 IEEE international conference on data mining (ICDM)},
  pages={307--316},
  year={2017},
  organization={IEEE}
}

@book{weisberg2005applied,
  title     = {Applied Linear Regression},
  author    = {Weisberg, Sanford},
  year      = {2005},
  edition   = {3},
  publisher = {Wiley},
  address   = {Hoboken, NJ}
}

@book{kutner2005applied,
  title     = {Applied Linear Statistical Models},
  author    = {Kutner, Michael H. and Nachtsheim, Christopher J. and Neter, John and Li, William},
  year      = {2005},
  edition   = {5},
  publisher = {McGraw-Hill/Irwin},
  address   = {New York}
}

@article{jiao2018bivariate,
  title={Bivariate causal discovery and its applications to gene expression and imaging data analysis},
  author={Jiao, Rong and Lin, Nan and Hu, Zixin and Bennett, David A and Jin, Li and Xiong, Momiao},
  journal={Frontiers in genetics},
  volume={9},
  pages={347},
  year={2018},
  publisher={Frontiers Media SA}
}

@article{peters2014causal,
  title   = {Causal Discovery with Continuous Additive Noise Models},
  author  = {Peters, Jonas and Mooij, Joris M. and Janzing, Dominik and Sch{\"o}lkopf, Bernhard},
  journal = {Journal of Machine Learning Research},
  volume  = {15},
  pages   = {2009--2053},
  year    = {2014},
  url     = {https://jmlr.org/papers/v15/peters14a.html}
}

@book{lehmann2005testing,
  title     = {Testing Statistical Hypotheses},
  author    = {Lehmann, Erich L. and Romano, Joseph P.},
  year      = {2005},
  edition   = {3rd},
  publisher = {Springer}
}

@article{powers2011evaluation,
  title={Evaluation: From Precision, Recall and {F}-Measure to {ROC}, Informedness, Markedness \& Correlation},
  author={Powers, David M. W.},
  journal={Journal of Machine Learning Technologies},
  volume={2},
  number={1},
  pages={37--63},
  year={2011}
}

@article{janitza2016pitfalls,
  author  = {Janitza, Silke and Binder, Harald and Boulesteix, Anne-Laure},
  title   = {Pitfalls of hypothesis tests and model selection on bootstrap samples: causes and consequences in biometrical applications},
  journal = {Biometrical Journal},
  volume  = {58},
  number  = {3},
  pages   = {447--473},
  year    = {2016},
  doi     = {10.1002/bimj.201400246}
}

@article{chang2026post,
  title={Post-selection inference for causal effects after causal discovery},
  author={Chang, Ting-Hsuan and Guo, Zijian and Malinsky, Daniel},
  journal={Biometrika},
  volume={113},
  number={1},
  pages={asaf073},
  year={2026},
  publisher={Oxford University Press}
}

@article{fawcett2006introduction,
  title={An Introduction to {ROC} Analysis},
  author={Fawcett, Tom},
  journal={Pattern Recognition Letters},
  volume={27},
  number={8},
  pages={861--874},
  year={2006},
  publisher={Elsevier}
}

@book{vandervaart1998asymptotic,
  title        = {Asymptotic Statistics},
  author       = {van der Vaart, Aad W.},
  year         = {1998},
  publisher    = {Cambridge University Press},
  address      = {Cambridge},
  note         = {Continuous Mapping Theorem, Section 2.3}
}

@book{peters2017elements,
  title        = {Elements of Causal Inference: Foundations and Learning Algorithms},
  author       = {Peters, Jonas and Janzing, Dominik and Sch{\"o}lkopf, Bernhard},
  year         = {2017},
  publisher    = {MIT Press},
  address      = {Cambridge, MA},
  isbn         = {9780262037310}
}

@book{casella2021statistical,
  title     = {Statistical Inference},
  author    = {Casella, George and Berger, Roger L.},
  year      = {2021},
  edition   = {3rd},
  publisher = {Cengage Learning}
}

@article{wang2025confidence,
  title={Confidence sets for causal orderings},
  author={Wang, Y Samuel and Kolar, Mladen and Drton, Mathias},
  journal={Journal of the American Statistical Association},
  pages={1--14},
  year={2025},
  publisher={Taylor \& Francis}
}

@article{forastiere2021identification,
  title={Identification and estimation of treatment and interference effects in observational studies on networks},
  author={Forastiere, Laura and Airoldi, Edoardo M and Mealli, Fabrizia},
  journal={Journal of the American Statistical Association},
  volume={116},
  number={534},
  pages={901--918},
  year={2021},
  publisher={Taylor \& Francis}
}

@article{glymour_review_2019,
	title = {Review of {Causal} {Discovery} {Methods} {Based} on {Graphical} {Models}},
	volume = {10},
	issn = {1664-8021},
	url = {https://www.frontiersin.org/articles/10.3389/fgene.2019.00524},
	urldate = {2023-03-26},
	journal = {Frontiers in Genetics},
	author = {Glymour, Clark and Zhang, Kun and Spirtes, Peter},
	year = {2019},
}

@article{jiang2023signet,
  title={SIGNET: transcriptome-wide causal inference for gene regulatory networks},
  author={Jiang, Zhongli and Chen, Chen and Xu, Zhenyu and Wang, Xiaojian and Zhang, Min and Zhang, Dabao},
  journal={Scientific Reports},
  volume={13},
  number={1},
  pages={19371},
  year={2023},
  publisher={Nature Publishing Group UK London}
}

@article{wood2003thinplate,
  title={Thin plate regression splines},
  author={Wood, Simon N.},
  journal={Journal of the Royal Statistical Society: Series B (Statistical Methodology)},
  volume={65},
  number={1},
  pages={95--114},
  year={2003},
  publisher={Royal Statistical Society}
}

@book{wood2017gam,
  title={Generalized Additive Models: An Introduction with R},
  author={Wood, Simon N.},
  edition={2},
  year={2017},
  publisher={Chapman and Hall/CRC},
  address={Boca Raton}
}

@article{schwarz1978bic,
  title={Estimating the dimension of a model},
  author={Schwarz, Gideon},
  journal={The Annals of Statistics},
  volume={6},
  number={2},
  pages={461--464},
  year={1978},
  publisher={Institute of Mathematical Statistics}
}

@article{hanley1982meaning,
  title={The meaning and use of the area under a receiver operating characteristic (ROC) curve},
  author={Hanley, James A and McNeil, Barbara J},
  journal={Radiology},
  volume={143},
  number={1},
  pages={29--36},
  year={1982}
}

@inproceedings{komatsu2010assessing,
  title={Assessing statistical reliability of {LiNGAM} via multiscale bootstrap},
  author={Komatsu, Yusuke and Shimizu, Shohei and Shimodaira, Hidetoshi},
  booktitle={International Conference on Artificial Neural Networks},
  pages={309--314},
  year={2010},
  organization={Springer}
}

@article{motokawa2020causal,
  title        = {Causal relationships among health checkup variables with the use of {LiNGAM}},
  author       = {Motokawa, Tsukasa and Watanabe, Takuya and Moriyama, Koji and Takada, Yukio and Kitamura, Akihiro and Kato, Takahiro and Kokubo, Yoshihiro and Miyamoto, Yutaka and Okamura, Tomonori},
  journal      = {PLOS ONE},
  volume       = {15},
  number       = {12},
  pages        = {e0243432},
  year         = {2020},
  publisher    = {Public Library of Science},
  doi          = {10.1371/journal.pone.0243432},
  url          = {https://pmc.ncbi.nlm.nih.gov/articles/PMC7757823/}
}

@book{hosmer2013applied,
  title={Applied Logistic Regression},
  author={Hosmer, David W and Lemeshow, Stanley and Sturdivant, Rodney X},
  edition={3},
  publisher={Wiley},
  year={2013}
}

@article{ellison2022gam_bic,
  title={Model selection for generalized additive models using BIC},
  author={Ellison, Aaron M.},
  journal={Demographic Research},
  volume={47},
  pages={561--594},
  year={2022}
}

@article{zhang2012inferring,
  title={Inferring gene regulatory networks from gene expression data by path consistency algorithm based on conditional mutual information},
  author={Zhang, Xiujun and Zhao, Xing-Ming and He, Kun and Lu, Le and Cao, Yongwei and Liu, Jingdong and Hao, Jin-Kao and Liu, Zhi-Ping and Chen, Luonan},
  journal={Bioinformatics},
  volume={28},
  number={1},
  pages={98--104},
  year={2012},
  publisher={Oxford University Press}
}

@misc{adminAirData,
  author = {{Federal Office for the Environment (FOEN)}},
  title = {Air: Data, Indicators and Maps},
  year = {2009},
  url = {https://www.bafu.admin.ch/bafu/en/home/topics/air/state/data.html},
  note = {Accessed: 2026-04-21}
}

@article{saito2023causal,
  title={Causal analysis of nitrogen oxides emissions process in coal-fired power plant with {LiNGAM}},
  author={Saito, Tatsuki and Fujiwara, Koichi},
  journal={Frontiers in Analytical Science},
  volume={3},
  pages={1045324},
  year={2023},
  publisher={Frontiers}
}

@article{kotoku2020causal,
  title={Causal relations of health indices inferred statistically using the {DirectLiNGAM} algorithm from big data of Osaka prefecture health checkups},
  author={Kotoku, Jun’ichi and Oyama, Asuka and Kitazumi, Kanako and Toki, Hiroshi and Haga, Akihiro and Yamamoto, Ryohei and Shinzawa, Maki and Yamakawa, Miyae and Fukui, Sakiko and Yamamoto, Keiichi and others},
  journal={Plos one},
  volume={15},
  number={12},
  pages={e0243229},
  year={2020},
  publisher={Public Library of Science San Francisco, CA USA}
}

@article{hu2018application,
  title={Application of causal inference to genomic analysis: advances in methodology},
  author={Hu, Pengfei and Jiao, Rong and Jin, Li and Xiong, Momiao},
  journal={Frontiers in Genetics},
  volume={9},
  pages={238},
  year={2018},
  publisher={Frontiers Media SA}
}

@book{cohen_power,
  author    = {Jacob Cohen},
  title     = {Statistical Power Analysis for the Behavioral Sciences},
  edition   = {2nd},
  publisher = {Routledge},
  year      = {1988},
  address   = {New York},
  isbn      = {978-0-8058-0283-2},
  doi       = {10.4324/9780203771587}
}

@article{zhang2012identifiability,
  title={On the identifiability of the post-nonlinear causal model},
  author={Zhang, Kun and Hyvarinen, Aapo},
  journal={arXiv preprint arXiv:1205.2599},
  year={2012}
}

\appendix

\section{Asymptotic Joint Distribution of $(T_{1b,Y,n},\,T_{1c,X,n})$}
\label{app:joint_theorem}

In this appendix we show that the joint distribution of the test statistics
\((T_{1b,Y,n},\, T_{1c,X,n})\) is asymptotically bivariate normal as stated in
Section~\ref{subsec:motivate_subsamp}, adopting the independence–and–goodness-of-fit
framework of \citet{sen_testing_2014}. We work in the scalar case \(X\in\mathbb{R}\),
\(Y\in\mathbb{R}\), which is the setting used throughout our paper.

\subsection*{Setup and notation}

The true data-generating mechanism is
\begin{equation}
    Y = m(X) + \eta,
    \qquad E[\eta \mid X] = 0.
    \label{eq:gen_y_joint}
\end{equation}

Assume \(m\) is bijective on its image. For any measurable function \(l(Y)\), define
\[
\xi = m^{-1}(Y-\eta)-l(Y).
\]
For simplicity, let $l(Y)=E[X\mid Y],$ so that \(E[\xi\mid Y]=0\).

Let \((X_1,Y_1),\ldots,(X_Y,n_n)\) be i.i.d.\ observations generated from
\eqref{eq:gen_y_joint}. For \(i=1,\ldots,n\), define
\begin{enumerate}
    \item \(e_i = Y_i - X_i\hat{\beta}_n\) (residual in the true direction),
    \item \(\epsilon_i = m(X_i)-X_i\tilde{\beta}_0+\eta_i\) (misspecification error in the true direction),
    \item \(d_i = X_i - Y_i\hat{\gamma}_n\) (residual in the incorrect direction),
    \item \(\delta_i = l(Y_i)-Y_i\tilde{\gamma}_0+\xi_i\) (misspecification error in the incorrect direction).
\end{enumerate}

Let \(\theta(U,V)\) denote the Hilbert--Schmidt independence criterion \citep{gretton_kernel_2008}:
\begin{equation}
\label{eq:hsic_joint}
\theta(U,V)
=
E[k(U,U')h(V,V')]
+
E[k(U,U')]E[h(V,V')]
-
2E[k(U,U')h(V,V'')],
\end{equation}
where \((U',V')\) and \((U'',V'')\) are independent copies of \((U,V)\).

We consider the test statistics
\begin{align}
    T_{1b,Y,n}
    &= \frac{1}{n^2} \sum_{i,j=1}^n k_{ij} h_{ij}
    + \frac{1}{n^4} \sum_{i,j,q,r=1}^n k_{ij} h_{qr}
    - \frac{2}{n^3} \sum_{i,j,q=1}^n k_{ij} h_{iq},
    \label{eq:T1b_joint}
    \\
    T_{1c,X,n}
    &= \frac{1}{n^2} \sum_{i,j=1}^n k^{\prime}_{ij} h^{\prime}_{ij}
    + \frac{1}{n^4} \sum_{i,j,q,r=1}^n k^{\prime}_{ij} h^{\prime}_{qr}
    - \frac{2}{n^3} \sum_{i,j,q=1}^n k^{\prime}_{ij} h^{\prime}_{iq},
    \label{eq:T1c_joint}
\end{align}
where
\[
k_{ij}=k(X_i,X_j),\qquad h_{ij}=h(e_i,e_j),
\qquad
k'_{ij}=k'(Y_i,Y_j),\qquad h'_{ij}=h'(d_i,d_j),
\]
and \(k,k',h,h'\) are characteristic kernels defined on \(\mathbb{R}\times\mathbb{R}\).

Under \((H_Y^{1b},H_X^{1c})\), the corresponding population targets are
\[
\theta(X,\epsilon)
\qquad\text{and}\qquad
\theta(Y,\delta).
\]

Let
\[
T_n=
\begin{pmatrix}
T_{1b,Y,n}\\
T_{1c,X,n}
\end{pmatrix}.
\]

\subsection*{Conditions}

\textit{Condition 1.}
\[
A=E[X^2]>0,
\qquad
B=E[Y^2]>0.
\]

\textit{Condition 2.}
The kernels \(k,k',h,h'\) are characteristic, \(k,k'\) are continuous, and
\(h,h'\) are twice continuously differentiable. Their first- and second-order
partial derivatives are Lipschitz continuous.

\textit{Condition 3.}
\[
E[X^2]<\infty,\quad E[Y^2]<\infty,\quad E[\eta^2]<\infty,\quad E[\xi^2]<\infty, \quad E[X^2\epsilon^2]<\infty,\quad  \text{and} \quad E[Y^2\delta^2]<\infty.
\]
\textit{Condition 4.}
The kernel--moment conditions required by Lemma~1 in the Supplement of
\citet{sen_testing_2014} hold for both \((X,\epsilon)\) and \((Y,\delta)\), so that
the Taylor expansion remainders below satisfy
\[
R_n=o_p(n^{-1/2}),
\qquad
R_n'=o_p(n^{-1/2}).
\]

\textit{Condition 5.}
Both
\[
m(X)-X\tilde{\beta}_0
\qquad\text{and}\qquad
l(Y)-Y\tilde{\gamma}_0
\]
are nonconstant with positive probability.

\begin{theorem}
\label{thm:bivariate_joint}
Suppose that Conditions 1--5 hold. Then under \((H_Y^{1b},H_X^{1c})\),
\[
\sqrt{n}(T_n-\theta)\xrightarrow{d}N_2(0,\Sigma),
\]
where
\[
\theta=
\begin{pmatrix}
\theta(X,\epsilon)\\
\theta(Y,\delta)
\end{pmatrix},
\qquad
\Sigma=
\begin{pmatrix}
\sigma^2_{1b,Y} & \sigma_{1b,1c;X,Y}\\
\sigma_{1b,1c;X,Y} & \sigma^2_{1c,X}
\end{pmatrix},
\]
and
\[
\theta(X,\epsilon)>0,
\qquad
\theta(Y,\delta)>0.
\]
Moreover,
\begin{equation}
\label{eq:var_1bY}
\sigma^2_{1b,Y}
=
\Var\!\left(h^{(0)}_{Y,1}(W_1)+\nu A^{-1}X_1\epsilon_1\right),
\end{equation}
\begin{equation}
\label{eq:cov_1b1c}
\sigma_{1b,1c;X,Y}
=
\Cov\!\left(
h^{(0)}_{Y,1}(W_1)+\nu A^{-1}X_1\epsilon_1,\,
h^{(0)}_{X,1}(W'_1)+\phi B^{-1}Y_1\delta_1
\right),
\end{equation}
and
\begin{equation}
\label{eq:var_1cX}
\sigma^2_{1c,X}
=
\Var\!\left(h^{(0)}_{X,1}(W'_1)+\phi B^{-1}Y_1\delta_1\right),
\end{equation}
where \(W_i=(X_i,\epsilon_i)\), \(W_i'=(Y_i,\delta_i)\), and
\(h^{(0)}_{Y,1},h^{(0)}_{X,1},\nu,\phi\) are defined in the proof.
\end{theorem}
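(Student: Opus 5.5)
The plan is to follow the single-statistic argument of \citet{sen_testing_2014} for each coordinate, obtaining an asymptotically linear representation of each statistic, and then apply the multivariate CLT together with Cramér--Wold to the stacked influence functions. Both statistics are computed from the same i.i.d.\ sample $(X_i,Y_i)$. Hence $W_i=(X_i,\epsilon_i)$ and $W_i'=(Y_i,\delta_i)$ are measurable functions of a common i.i.d.\ sequence, so joint normality follows once each coordinate is written as a sample mean plus $o_p(n^{-1/2})$.

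\textbf{Step 1 (oracle statistics).} Let $\tilde T_{1b,Y,n}$ and $\tilde T_{1c,X,n}$ be the V-statistics in \eqref{eq:T1b_joint}--\eqref{eq:T1c_joint} with the estimated residuals $e_i,d_i$ replaced by the population errors $\epsilon_i,\delta_i$. Standard Hoeffding decomposition for V-statistics of bounded kernels of order at most four gives
\[
\sqrt{n}\bigl(\tilde T_{1b,Y,n}-\theta(X,\epsilon)\bigr)=\frac{1}{\sqrt n}\sum_{i=1}^n h^{(0)}_{Y,1}(W_i)+o_p(1).
\]
Here $h^{(0)}_{Y,1}$ is defined as four times the centered first-order projection of the symmetrized HSIC kernel. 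The analogous expansion with $h^{(0)}_{X,1}(W_i')$ holds for the reverse direction. The $O(n^{-1})$ bias of a V-statistic versus the U-statistic is negligible at the $\sqrt n$ scale.

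\textbf{Step 2 (effect of estimating $\tilde\beta_0,\tilde\gamma_0$).} Write $e_i=\epsilon_i-X_i(\hat\beta_n-\tilde\beta_0)$ and Taylor-expand $h(e_i,e_j)$ around $(\epsilon_i,\epsilon_j)$, using Condition 2. The first-order term equals $(\hat\beta_n-\tilde\beta_0)$ times an empirical average that converges to a constant $\nu$. Concretely, $\nu=-E[\partial$-terms of $h$ at $(\epsilon,\epsilon')$ weighted by $k$ and $X$-factors$]$, which I would define explicitly as the derivative of $\beta\mapsto\theta(X,Y-X\beta)$ at $\tilde\beta_0$. The second-order remainder is $R_n=o_p(n^{-1/2})$ by Condition 4. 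Since $\hat\beta_n$ is the least-squares fit, $\tilde\beta_0=E[XY]/A$ and $\sqrt n(\hat\beta_n-\tilde\beta_0)=A^{-1}n^{-1/2}\sum X_i\epsilon_i+o_p(1)$; this uses $E[X\epsilon]=0$ together with Conditions 1 and 3. Combining these gives the influence function $h^{(0)}_{Y,1}(W_i)+\nu A^{-1}X_i\epsilon_i$. The same argument with $\phi$ and $B$ handles $d_i$. Note that, unlike the null case in Sen and Sen, $\nu\neq0$ in general, because $\epsilon\not\perp X$ under misspecification; this is why the linear term survives and the rate is $\sqrt n$ rather than $n$.

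\textbf{Step 3 (joint CLT and positivity).} Stack the two influence functions into a mean-zero vector with finite second moments, finite by Condition 3 and kernel boundedness. The bivariate CLT with Slutsky then yields $N_2(0,\Sigma)$ with the stated entries \eqref{eq:var_1bY}--\eqref{eq:var_1cX}. For positivity, Condition 5 implies that $\epsilon=(m(X)-X\tilde\beta_0)+\eta$ has a conditional mean given $X$ that is nonconstant, so $X\not\perp\epsilon$. Since the kernels are characteristic, $\theta(X,\epsilon)>0$; likewise $\theta(Y,\delta)>0$.

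\textbf{Main obstacle.} I expect the hardest step to be Step 2: controlling the Taylor remainder uniformly in the degree-four V-statistic and identifying $\nu,\phi$ cleanly when the errors are dependent on the regressors. Here I would lean directly on Lemma~1 of the Sen--Sen supplement, as encoded by Condition 4, rather than reproving it. A secondary subtlety is the stray definition $\xi=m^{-1}(Y-\eta)-l(Y)$, which equals $X-l(Y)$ and is consistent with Assumption~\ref{assump:standing_xy}. Only its moments enter the argument, so no bijectivity of $m$ is actually needed.
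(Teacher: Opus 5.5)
Your proposal is correct and follows essentially the same route as the paper: least-squares expansions of $\hat\beta_n$ and $\hat\gamma_n$, and a Taylor expansion of the residual kernels with the remainder handled by Lemma~1 of the Sen--Sen supplement (Condition~4). It then uses a first-order projection of the oracle V-statistics and a law of large numbers for the derivative term, where your $\nu$, defined as the derivative of $\beta\mapsto\theta(X,Y-X\beta)$ at $\tilde\beta_0$, agrees with the paper's limit of $T^{(1)}_{1b,Y,n}$; it finishes with a stacked bivariate CLT and positivity via Condition~5 and characteristic kernels. Two minor remarks: the $\sqrt{n}$ rate stems mainly from nondegeneracy of the oracle V-statistic once $\theta>0$ rather than from $\nu\neq0$, and your kernel-boundedness assumption is not among Conditions~1--5, though the paper likewise simply asserts the needed second moments.
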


\begin{proof}
The least squares estimate
\[
\hat{\beta}_n=A_n^{-1}\left(\frac{1}{n}\sum_{i=1}^n X_iY_i\right),
\qquad
A_n=\frac{1}{n}\sum_{i=1}^n X_i^2,
\]
admits the following expansion around \(\tilde{\beta}_0\):
\begin{align}
\sqrt{n}(\hat{\beta}_n-\tilde{\beta}_0)
&=
\sqrt{n}\left(
A_n^{-1}\frac{1}{n}\sum_{i=1}^n X_iY_i-\tilde{\beta}_0
\right)
\nonumber\\
&=
\sqrt{n}\left(
A_n^{-1}\frac{1}{n}\sum_{i=1}^n X_i\{m(X_i)-X_i\tilde{\beta}_0+\eta_i\}
\right)
\nonumber\\
&=
\{1+o_p(1)\}\frac{1}{\sqrt{n}}\sum_{i=1}^n A^{-1}X_i\epsilon_i.
\label{eq:beta_exp_joint}
\end{align}
Here we use that \(A_n\to A\) almost surely and \(A>0\) by Condition 1.
Similarly,
\begin{equation}
\sqrt{n}(\hat{\gamma}_n-\tilde{\gamma}_0)
=
\{1+o_p(1)\}\frac{1}{\sqrt{n}}\sum_{i=1}^n B^{-1}Y_i\delta_i.
\label{eq:gamma_exp_joint}
\end{equation}

The population normal equation gives
\[
E[X\{m(X)-X\tilde{\beta}_0\}]=0,
\]
and since \(E[\eta\mid X]=0\),
\[
E[X\eta]=E[XE(\eta\mid X)]=0.
\]
Hence $E[X\epsilon]=0$. Similarly, because \(E[\xi\mid Y]=0\),
\[
E[Y\delta]=E[Y\{l(Y)-Y\tilde{\gamma}_0\}]+E[Y\xi]=0.
\]
By Condition 3, the variances
\[
A^{-1}E[X^2\epsilon^2]A^{-1}
\qquad\text{and}\qquad
B^{-1}E[Y^2\delta^2]B^{-1}
\]
are finite, so by the central limit theorem, $\sqrt{n}(\hat{\beta}_n-\tilde{\beta}_0)$ and $\sqrt{n}(\hat{\gamma}_n-\tilde{\gamma}_0)$ are asymptotically Gaussian.

We next expand \(h_{ij}=h(e_i,e_j)\) around \(h(\epsilon_i,\epsilon_j)\):
\[
h_{ij}
=
h(\epsilon_i,\epsilon_j)
+
(e_i-\epsilon_i)h_x(\lambda_{ijn},\tau_{ijn})
+
(e_j-\epsilon_j)h_y(\lambda_{ijn},\tau_{ijn}),
\]
where \((\lambda_{ijn},\tau_{ijn})\) lies on the line segment joining
\((e_i,e_j)\) and \((\epsilon_i,\epsilon_j)\). Similarly,
\[
h'_{ij}
=
h'(\delta_i,\delta_j)
+
(d_i-\delta_i)h'_x(\lambda'_{ijn},\tau'_{ijn})
+
(d_j-\delta_j)h'_y(\lambda'_{ijn},\tau'_{ijn}),
\]
where \((\lambda'_{ijn},\tau'_{ijn})\) lies on the line segment joining
\((d_i,d_j)\) and \((\delta_i,\delta_j)\).

Using
\[
e_i-\epsilon_i=-X_i(\hat{\beta}_n-\tilde{\beta}_0),
\qquad
d_i-\delta_i=-Y_i(\hat{\gamma}_n-\tilde{\gamma}_0),
\]
we obtain the decompositions
\begin{equation}
\label{eq:T_decomp_joint}
\begin{pmatrix}
T_{1b,Y,n}\\
T_{1c,X,n}
\end{pmatrix}
=
\begin{pmatrix}
T^{(0)}_{1b,Y,n}+(\hat{\beta}_n-\tilde{\beta}_0)T^{(1)}_{1b,Y,n}+R_n\\
T^{(0)}_{1c,X,n}+(\hat{\gamma}_n-\tilde{\gamma}_0)T^{(1)}_{1c,X,n}+R_n'
\end{pmatrix},
\end{equation}
where
\begin{align*}
T^{(0)}_{1b,Y,n}
&=
\frac{1}{n^2}\sum_{i,j=1}^n k(X_i,X_j)h(\epsilon_i,\epsilon_j)
+
\frac{1}{n^4}\sum_{i,j,q,r=1}^n k(X_i,X_j)h(\epsilon_q,\epsilon_r)
-
\frac{2}{n^3}\sum_{i,j,q=1}^n k(X_i,X_j)h(\epsilon_i,\epsilon_q),
\\
T^{(1)}_{1b,Y,n}
&=
\frac{1}{n^2}\sum_{i,j=1}^n k(X_i,X_j)h^{(1)}_{ij}
+
\frac{1}{n^4}\sum_{i,j,q,r=1}^n k(X_i,X_j)h^{(1)}_{qr}
-
\frac{2}{n^3}\sum_{i,j,q=1}^n k(X_i,X_j)h^{(1)}_{iq},
\\
T^{(0)}_{1c,X,n}
&=
\frac{1}{n^2}\sum_{i,j=1}^n k'(Y_i,Y_j)h'(\delta_i,\delta_j)
+
\frac{1}{n^4}\sum_{i,j,q,r=1}^n k'(Y_i,Y_j)h'(\delta_q,\delta_r) -
\frac{2}{n^3}\sum_{i,j,q=1}^n k'(Y_i,Y_j)h'(\delta_i,\delta_q),
\\
T^{(1)}_{1c,X,n}
&=
\frac{1}{n^2}\sum_{i,j=1}^n k'(Y_i,Y_j)h^{\prime(1)}_{ij}
+
\frac{1}{n^4}\sum_{i,j,q,r=1}^n k'(Y_i,Y_j)h^{\prime(1)}_{qr}
-
\frac{2}{n^3}\sum_{i,j,q=1}^n k'(Y_i,Y_j)h^{\prime(1)}_{iq},
\end{align*}
with
\[
h^{(1)}_{ij}
=
-\big\{h_x(\epsilon_i,\epsilon_j)X_i+h_y(\epsilon_i,\epsilon_j)X_j\big\} \quad \text{and} \quad h^{\prime(1)}_{ij}
=
-\big\{h'_x(\delta_i,\delta_j)Y_i+h'_y(\delta_i,\delta_j)Y_j\big\}.
\]
By Lemma~1 in the Supplement of \citet{sen_testing_2014} and Condition 4,
\[
R_n=o_p(n^{-1/2}),
\qquad
R_n'=o_p(n^{-1/2}).
\]

We now show that \(\theta(X,\epsilon)>0\) and \(\theta(Y,\delta)>0\). Since
\[
E[\epsilon\mid X]
=
m(X)-X\tilde{\beta}_0+E[\eta\mid X]
=
m(X)-X\tilde{\beta}_0,
\]
and
\[
E[\delta\mid Y]
=
l(Y)-Y\tilde{\gamma}_0+E[\xi\mid Y]
=
l(Y)-Y\tilde{\gamma}_0,
\]
Condition 5 implies that both conditional means are nonconstant with positive
probability. Hence \(X\not\perp \epsilon\) and \(Y\not\perp \delta\). Since the
kernels are characteristic, HSIC detects dependence, so $\theta(X,\epsilon)>0$ and $\theta(Y,\delta)>0.$

Next, let \(W_i=(X_i,\epsilon_i)\) and \(W_i'=(Y_i,\delta_i)\). Then
\(T^{(0)}_{1b,Y,n}\) and \(T^{(0)}_{1c,X,n}\) are V-statistics with finite second
moments and means \(\theta(X,\epsilon)\) and \(\theta(Y,\delta)\), respectively.
Therefore, by the standard asymptotic theory of nondegenerate V-statistics, there
exist mean-zero, square-integrable functions \(h^{(0)}_{Y,1}\) and
\(h^{(0)}_{X,1}\) such that
\begin{align}
\sqrt{n}\{T^{(0)}_{1b,Y,n}-\theta(X,\epsilon)\}
&=
\frac{1}{\sqrt{n}}\sum_{i=1}^n h^{(0)}_{Y,1}(W_i)+o_p(1),
\label{eq:proj_Y_joint}
\\
\sqrt{n}\{T^{(0)}_{1c,X,n}-\theta(Y,\delta)\}
&=
\frac{1}{\sqrt{n}}\sum_{i=1}^n h^{(0)}_{X,1}(W_i')+o_p(1).
\label{eq:proj_X_joint}
\end{align}

By the weak law of large numbers for V-statistics,
\begin{equation}
T^{(1)}_{1b,Y,n}\xrightarrow{p}\nu \quad \text{and} \quad
T^{(1)}_{1c,X,n}\xrightarrow{p}\phi,
\label{eq:T1X_limit_joint}
\end{equation}
for finite constants \(\nu\) and \(\phi\).

Combining \eqref{eq:T_decomp_joint}, \eqref{eq:beta_exp_joint},
\eqref{eq:gamma_exp_joint}, \eqref{eq:proj_Y_joint}, \eqref{eq:proj_X_joint}, and \eqref{eq:T1X_limit_joint}, we obtain
\begin{align*}
\sqrt{n}
\begin{pmatrix}
T_{1b,Y,n}-\theta(X,\epsilon)\\
T_{1c,X,n}-\theta(Y,\delta)
\end{pmatrix}
&=
\begin{pmatrix}
\sqrt{n}\{T^{(0)}_{1b,Y,n}-\theta(X,\epsilon)\}
+\sqrt{n}(\hat{\beta}_n-\tilde{\beta}_0)T^{(1)}_{1b,Y,n}
+\sqrt{n}R_n
\\
\sqrt{n}\{T^{(0)}_{1c,X,n}-\theta(Y,\delta)\}
+\sqrt{n}(\hat{\gamma}_n-\tilde{\gamma}_0)T^{(1)}_{1c,X,n}
+\sqrt{n}R_n'
\end{pmatrix}
\\
&=
\begin{pmatrix}
\frac{1}{\sqrt{n}}\sum_{i=1}^n\left\{
h^{(0)}_{Y,1}(W_i)+\nu A^{-1}X_i\epsilon_i
\right\}
\\
\frac{1}{\sqrt{n}}\sum_{i=1}^n\left\{
h^{(0)}_{X,1}(W_i')+\phi B^{-1}Y_i\delta_i
\right\}
\end{pmatrix}
+o_p(1).
\end{align*}

Define
\[
\psi_i=
\begin{pmatrix}
h^{(0)}_{Y,1}(W_i)+\nu A^{-1}X_i\epsilon_i\\
h^{(0)}_{X,1}(W_i')+\phi B^{-1}Y_i\delta_i
\end{pmatrix}.
\]
Then \(\psi_1,\psi_2,\ldots\) are i.i.d., \(E[\psi_i]=0\), and by Condition 3
they have finite second moments. By the multivariate central limit theorem,
\[
\frac{1}{\sqrt{n}}\sum_{i=1}^n \psi_i
\xrightarrow{d}
N_2(0,\Sigma),
\]
where $\Sigma=\Var(\psi_1)$, with entries given in \eqref{eq:var_1bY}--\eqref{eq:var_1cX}. This completes the proof.
\end{proof}

\section{Simulation Setup Details}
\label{app:sim_truth}

\paragraph{Purpose.}
To study how CDSP responds under varying degrees of misspecification of the linear model, controlled by degrees $d \in \{1,1.2,1.25,1.3,1.4,1.5,3\}$. Note that all simulation experiments were conducted using a SLURM-based cluster computing environment to efficiently handle the computational demands. 

\paragraph{Inputs.}
Fix significance level $\alpha$, number of replications $M$, dataset size $N$, and a large Monte Carlo sample size $N_{\text{MC}}$ used to approximate population quantities.

\paragraph{Global oracle objects (computed once).}
Using a very large Monte Carlo sample from the \emph{true DGP}, $Y = f(X) + \eta$, $f(x) = \beta\,\mathrm{sign}(x-a)\,|x-a|^d$, with $X\sim P_X$ (truncated exponential) and $\eta\sim P_\eta$ (mixture), compute estimates of the population HSIC quantities $\theta(Y,\delta)$ and $\theta(X,\epsilon)$ by computing the corresponding test statistics using this large Monte Carlo dataset.
These quantities are treated as oracle (population) values.

\paragraph{Per-replication experiments.}
For each misspecification setting $d$ and for $m=1,\dots,M$:
\begin{enumerate}
  \item \textbf{Simulate data from the true DGP.} Generate a dataset $\mathcal{D}^{(m)}=\{(X_i^{(m)},Y_i^{(m)})\}_{i=1}^N$ using $X\sim P_X$, $\eta\sim P_\eta$, and $Y=f(X)+\eta$.
  
  \item \textbf{LiNGAM decision.} Fit DirectLiNGAM and record the chosen direction.
  
  \item \textbf{Estimate directional detectability indices using bootstrap.} Using bootstrap resampling on $\mathcal{D}^{(m)}$, obtain bootstrap estimates of the directional detectability indices $\widehat I^{(m)}_{Y,n}$ and $\widehat I^{(m)}_{X,n}$. Use this to estimate the causal direction via effect-size asymmetry.
  
  \item \textbf{Store per-replication test statistics.} Record $T_{1b,Y}^{(m)}$ and $T_{1c,X}^{(m)}$ for use in estimating the oracle asymptotic standard deviations.
  
  
  
  
\end{enumerate}

\paragraph{Population Effect-Size Asymmetry}
For each misspecification setting, use the following oracle estimates to evaluate the population effect-size asymmetry assumption via $I_X$ and $I_Y$:
\begin{itemize}


\item \textbf{Oracle standard deviations.} Estimate
$\sigma_{1b,Y} \approx \sqrt{N}\,\mathrm{sd}_M\!\big(T_{1b,Y}^{(m)}\big), \sigma_{1c,X} \approx \sqrt{N}\,\mathrm{sd}_M\!\big(T_{1c,X}^{(m)}\big)$.

\item \textbf{Population HSIC.} Use the Monte Carlo estimates of $\theta(Y,\delta)$ and $\theta(X,\epsilon)$ computed from the global oracle objects.

\end{itemize}

\paragraph{$CDSP$ and LiNGAM Accuracy.}
Across $m=1,\dots,M$ for each misspecification setting:
\begin{itemize}
  \item Estimate $CDSP$ accuracy as the proportion of replications in which the direction favored by the bootstrap-estimated directional detectability indices $\hat{I}_{X,n}$ and $\hat{I}_{Y,n}$ matches the direction favored by the population quantities $I_X$ and $I_Y$.
  \item Estimate LiNGAM accuracy as the proportion of replications in which LiNGAM identifies the true causal direction.
\end{itemize}

\section{Severe Model Misspecification Simulation Results}
\label{app:severe_miss}
\begin{figure}[!ht]
    \centering
    \includegraphics[width=0.7\linewidth]{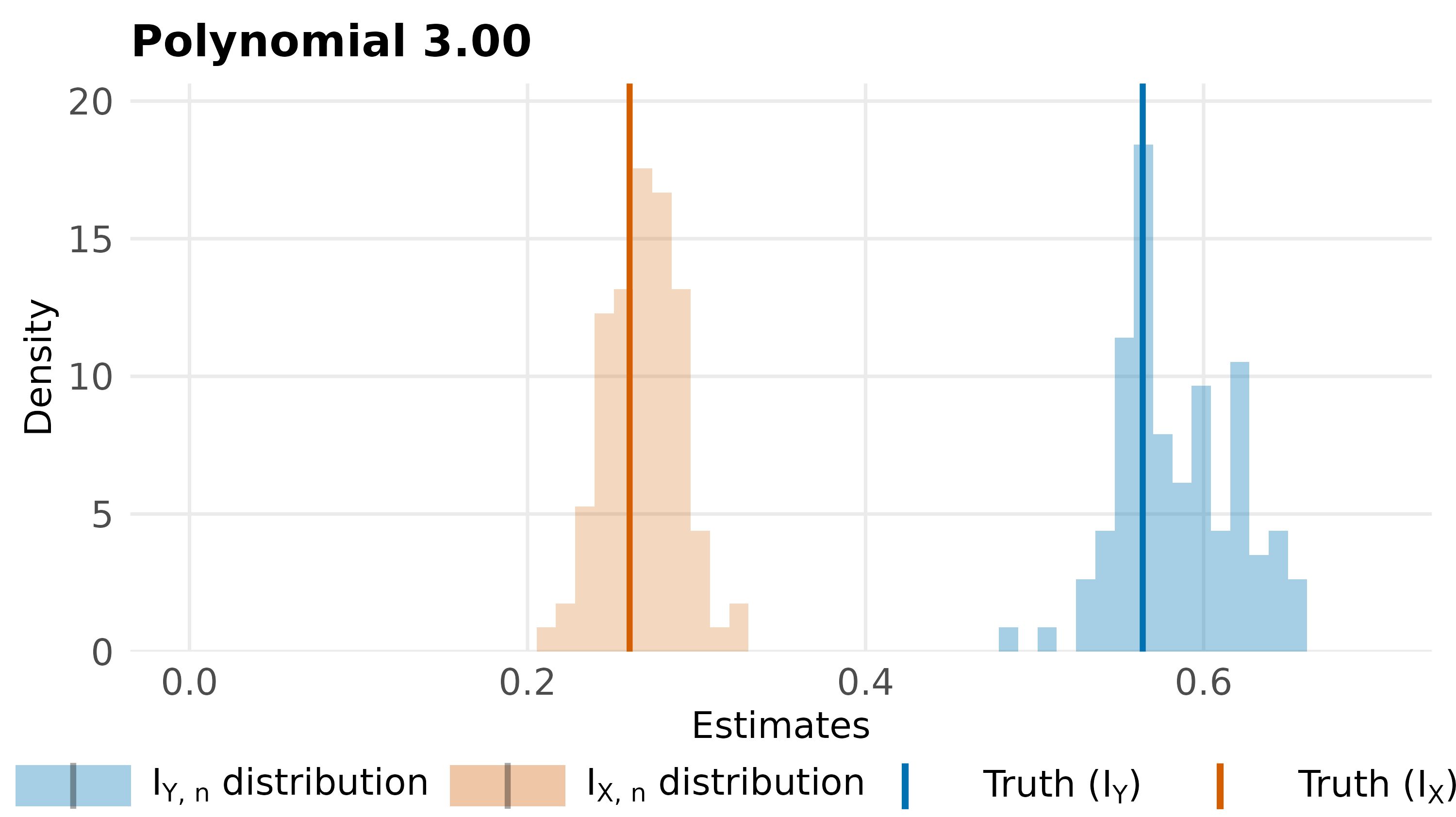}
    \caption{Distribution of the estimated directional detectability indices $I_{Y,n}$ (blue) and $I_{X,n}$ (orange) under $d = 3$. Solid vertical lines indicate the corresponding population values ($I_Y$ in blue and $I_X$ in orange).}
    \label{fig:p3_sims}
\end{figure}

We present CDSP under severe linear model misspecification with exponent $d = 3$. Under this setting, the effect-size asymmetry assumption no longer holds. The results for this setting are shown in Figure~\ref{fig:p3_sims}.

In contrast to the moderate misspecification settings, we again observe no overlap between the distributions of $I_{Y,n}$ and $I_{X,n}$, but now the ordering is reversed. Consequently, direction detection via effect-size asymmetry favors the incorrect direction $Y \to X$ across all Monte Carlo replications as the effect-size asymmetry assumption does not hold.

This severe misspecification setting highlights an important aspect of CDSP. While the method is robust to model misspecification, there exists a threshold at which the effect-size asymmetry assumption breaks down. As the degree of misspecification increases, the overlap between the $I_{Y,n}$ and $I_{X,n}$ distributions again diminishes, but in favor of the opposite direction. However, as discussed in Remark~\ref{rem:asymptotics_relevance}, this regime is not the primary setting for which CDSP is intended, since practitioners would typically avoid linear-model-based approaches under such severe misspecification.

\end{document}